\let\pdfoutput=\undefined\fi
\chardef\@x10\chardef\@xv60
\def\tcitime{
\def\@time{%
  \@minute\time\@hour\@minute\divide\@hour\@xv
  \ifnum\@hour<\@x 0\fi\the\@hour:%
  \multiply\@hour\@xv\advance\@minute-\@hour
  \ifnum\@minute<\@x 0\fi\the\@minute
  }}%
\def\x@hyperref#1#2#3{%
   \catcode`\~ = 12
   \catcode`\$ = 12
   \catcode`\_ = 12
   \catcode`\# = 12
   \catcode`\& = 12
   \catcode`\% = 12
   \y@hyperref{#1}{#2}{#3}%
}
\def\y@hyperref#1#2#3#4{%
   #2\ref{#4}#3
   \catcode`\~ = 13
   \catcode`\$ = 3
   \catcode`\_ = 8
   \catcode`\# = 6
   \catcode`\& = 4
   \catcode`\% = 14
}
\def\QCTOpt[#1]#2{%
  \def\QCTOptB{#1}
  \def\QCTOptA{#2}
}
\def\QCTNOpt#1{%
  \def\QCTOptA{#1}
  \let\QCTOptB\empty
}
\def\Qct{%
  \@ifnextchar[{%
    \QCTOpt}{\QCTNOpt}
}
\def\QCBOpt[#1]#2{%
  \def\QCBOptB{#1}%
  \def\QCBOptA{#2}%
}
\def\QCBNOpt#1{%
  \def\QCBOptA{#1}%
  \let\QCBOptB\empty
}
\def\Qcb{%
  \@ifnextchar[{%
    \QCBOpt}{\QCBNOpt}%
}
\def\PrepCapArgs{%
  \ifx\QCBOptA\empty
    \ifx\QCTOptA\empty
      {}%
    \else
      \ifx\QCTOptB\empty
        {\QCTOptA}%
      \else
        [\QCTOptB]{\QCTOptA}%
      \fi
    \fi
  \else
    \ifx\QCBOptA\empty
      {}%
    \else
      \ifx\QCBOptB\empty
        {\QCBOptA}%
      \else
        [\QCBOptB]{\QCBOptA}%
      \fi
    \fi
  \fi
}
\def\GRAPHICSPS#1{%
 \ifcase\GRAPHICSTYPE
   \special{ps: #1}%
 \or
   \special{language "PS", include "#1"}%
 \fi
}%
\def\graffile#1#2#3#4{%
    \bgroup
	   \@inlabelfalse
       \leavevmode
       \@ifundefined{bbl@deactivate}{\def~{\string~}}{\activesoff}%
        \raise -#4 \BOXTHEFRAME{%
           \hbox to #2{\raise #3\hbox to #2{\null #1\hfil}}}%
    \egroup
}%
\def\draftbox#1#2#3#4{%
 \leavevmode\raise -#4 \hbox{%
  \frame{\rlap{\protect\tiny #1}\hbox to #2%
   {\vrule height#3 width\z@ depth\z@\hfil}%
  }%
 }%
}%
\let\nographics=\@msidraft
\newif\ifwasdraft
\def\GRAPHIC#1#2#3#4#5{%
   \ifnum\@msidraft=\@ne\draftbox{#2}{#3}{#4}{#5}%
   \else\graffile{#1}{#3}{#4}{#5}%
   \fi
}
\def\addtoLaTeXparams#1{%
    \edef\LaTeXparams{\LaTeXparams #1}}%
\newif\ifBoxFrame \BoxFramefalse
\newif\ifOverFrame \OverFramefalse
\newif\ifUnderFrame \UnderFramefalse
\def\BOXTHEFRAME#1{%
   \hbox{%
      \ifBoxFrame
         \frame{#1}%
      \else
         {#1}%
      \fi
   }%
}
\def\doFRAMEparams#1{\BoxFramefalse\OverFramefalse\UnderFramefalse\readFRAMEparams#1\end}%
\def\readFRAMEparams#1{%
 \ifx#1\end%
  \let\next=\relax
  \else
  \ifx#1i\dispkind=\z@\fi
  \ifx#1d\dispkind=\@ne\fi
  \ifx#1f\dispkind=\tw@\fi
  \ifx#1t\addtoLaTeXparams{t}\fi
  \ifx#1b\addtoLaTeXparams{b}\fi
  \ifx#1p\addtoLaTeXparams{p}\fi
  \ifx#1h\addtoLaTeXparams{h}\fi
  \ifx#1X\BoxFrametrue\fi
  \ifx#1O\OverFrametrue\fi
  \ifx#1U\UnderFrametrue\fi
  \ifx#1w
    \ifnum\@msidraft=1\wasdrafttrue\else\wasdraftfalse\fi
    \@msidraft=\@ne
  \fi
  \let\next=\readFRAMEparams
  \fi
 \next
 }%
\def\IFRAME#1#2#3#4#5#6{%
      \bgroup
      \let\QCTOptA\empty
      \let\QCTOptB\empty
      \let\QCBOptA\empty
      \let\QCBOptB\empty
      #6%
      \parindent=0pt
      \leftskip=0pt
      \rightskip=0pt
      \setbox0=\hbox{\QCBOptA}%
      \@tempdima=#1\relax
      \ifOverFrame
          \typeout{This is not implemented yet}%
          \show\HELP
      \else
         \ifdim\wd0>\@tempdima
            \advance\@tempdima by \@tempdima
            \ifdim\wd0 >\@tempdima
               \setbox1 =\vbox{%
                  \unskip\hbox to \@tempdima{\hfill\GRAPHIC{#5}{#4}{#1}{#2}{#3}\hfill}%
                  \unskip\hbox to \@tempdima{\parbox[b]{\@tempdima}{\QCBOptA}}%
               }%
               \wd1=\@tempdima
            \else
               \textwidth=\wd0
               \setbox1 =\vbox{%
                 \noindent\hbox to \wd0{\hfill\GRAPHIC{#5}{#4}{#1}{#2}{#3}\hfill}\\%
                 \noindent\hbox{\QCBOptA}%
               }%
               \wd1=\wd0
            \fi
         \else
            \ifdim\wd0>0pt
              \hsize=\@tempdima
              \setbox1=\vbox{%
                \unskip\GRAPHIC{#5}{#4}{#1}{#2}{0pt}%
                \break
                \unskip\hbox to \@tempdima{\hfill \QCBOptA\hfill}%
              }%
              \wd1=\@tempdima
           \else
              \hsize=\@tempdima
              \setbox1=\vbox{%
                \unskip\GRAPHIC{#5}{#4}{#1}{#2}{0pt}%
              }%
              \wd1=\@tempdima
           \fi
         \fi
         \@tempdimb=\ht1
         \advance\@tempdimb by -#2
         \advance\@tempdimb by #3
         \leavevmode
         \raise -\@tempdimb \hbox{\box1}%
      \fi
      \egroup%
}%
\def\DFRAME#1#2#3#4#5{%
  \vspace\topsep
  \hfil\break
  \bgroup
     \leftskip\@flushglue
	 \rightskip\@flushglue
	 \parindent\z@
	 \parfillskip\z@skip
     \let\QCTOptA\empty
     \let\QCTOptB\empty
     \let\QCBOptA\empty
     \let\QCBOptB\empty
	 \vbox\bgroup
        \ifOverFrame 
           #5\QCTOptA\par
        \fi
        \GRAPHIC{#4}{#3}{#1}{#2}{\z@}%
        \ifUnderFrame 
           \break#5\QCBOptA
        \fi
	 \egroup
  \egroup
  \vspace\topsep
  \break
}%
\def\FFRAME#1#2#3#4#5#6#7{%
  \@ifundefined{floatstyle}
    {
     \begin{figure}[#1]%
    }
    {
	 \ifx#1h
      \begin{figure}[H]%
	 \else
      \begin{figure}[#1]%
	 \fi
	}
  \let\QCTOptA\empty
  \let\QCTOptB\empty
  \let\QCBOptA\empty
  \let\QCBOptB\empty
  \ifOverFrame
    #4
    \ifx\QCTOptA\empty
    \else
      \ifx\QCTOptB\empty
        \caption{\QCTOptA}%
      \else
        \caption[\QCTOptB]{\QCTOptA}%
      \fi
    \fi
    \ifUnderFrame\else
      \label{#5}%
    \fi
  \else
    \UnderFrametrue%
  \fi
  \begin{center}\GRAPHIC{#7}{#6}{#2}{#3}{\z@}\end{center}%
  \ifUnderFrame
    #4
    \ifx\QCBOptA\empty
      \caption{}%
    \else
      \ifx\QCBOptB\empty
        \caption{\QCBOptA}%
      \else
        \caption[\QCBOptB]{\QCBOptA}%
      \fi
    \fi
    \label{#5}%
  \fi
  \end{figure}%
 }%
\def\makeactives{
  \catcode`\"=\active
  \catcode`\;=\active
  \catcode`\:=\active
  \catcode`\'=\active
  \catcode`\~=\active
}
   \gdef\activesoff{%
      \def"{\string"}%
      \def;{\string;}%
      \def:{\string:}%
      \def'{\string'}%
      \def~{\string~}%
    }
\def\FRAME#1#2#3#4#5#6#7#8{%
 \bgroup
 \ifnum\@msidraft=\@ne
   \wasdrafttrue
 \else
   \wasdraftfalse%
 \fi
 \def\LaTeXparams{}%
 \dispkind=\z@
 \def\LaTeXparams{}%
 \doFRAMEparams{#1}%
 \ifnum\dispkind=\z@\IFRAME{#2}{#3}{#4}{#7}{#8}{#5}\else
  \ifnum\dispkind=\@ne\DFRAME{#2}{#3}{#7}{#8}{#5}\else
   \ifnum\dispkind=\tw@
    \edef\@tempa{\noexpand\FFRAME{\LaTeXparams}}%
    \@tempa{#2}{#3}{#5}{#6}{#7}{#8}%
    \fi
   \fi
  \fi
  \ifwasdraft\@msidraft=1\else\@msidraft=0\fi{}%
  \egroup
 }%
\def\TEXUX#1{"texux"}
\long\def\QQQ#1#2{%
     \long\expandafter\def\csname#1\endcsname{#2}}%
\long\def\QQA#1#2{}%
\def\QTR#1#2{{\csname#1\endcsname {#2}}}%
\def\EXPAND#1[#2]#3{}%
\def\NOEXPAND#1[#2]#3{}%
\def\LaTeXparent#1{}%
\def\ChildStyles#1{}%
\def\ChildDefaults#1{}%
\def\QTagDef#1#2#3{}%
  \providecommand{\UNICODE}[2][]{\protect\rule{.1in}{.1in}}
  \providecommand{\U}[1]{\protect\rule{.1in}{.1in}}
\def\QQfnmark#1{\footnotemark}
 \def\abstract{%
  \if@twocolumn
   \section*{Abstract (Not appropriate in this style!)}%
   \else \small 
   \begin{center}{\bf Abstract\vspace{-.5em}\vspace{\z@}}\end{center}%
   \quotation 
   \fi
  }%
   \def\registered{\relax\ifmmode{}\r@gistered
                    \else$\m@th\r@gistered$\fi}%
 \def\r@gistered{^{\ooalign
  {\hfil\raise.07ex\hbox{$\scriptstyle\rm\text{R}$}\hfil\crcr
  \mathhexbox20D}}}}{}%
\newdimen\theight
\def\newfmtname{LaTeX2e}
  \DeclareOldFontCommand{\rm}{\normalfont\rmfamily}{\mathrm}
  \DeclareOldFontCommand{\sf}{\normalfont\sffamily}{\mathsf}
  \DeclareOldFontCommand{\tt}{\normalfont\ttfamily}{\mathtt}
  \DeclareOldFontCommand{\bf}{\normalfont\bfseries}{\mathbf}
  \DeclareOldFontCommand{\it}{\normalfont\itshape}{\mathit}
  \DeclareOldFontCommand{\sl}{\normalfont\slshape}{\@nomath\sl}
  \DeclareOldFontCommand{\sc}{\normalfont\scshape}{\@nomath\sc}
\def\alpha{{\Greekmath 010B}}%
\def\beta{{\Greekmath 010C}}%
\def\gamma{{\Greekmath 010D}}%
\def\delta{{\Greekmath 010E}}%
\def\epsilon{{\Greekmath 010F}}%
\def\zeta{{\Greekmath 0110}}%
\def\eta{{\Greekmath 0111}}%
\def\theta{{\Greekmath 0112}}%
\def\iota{{\Greekmath 0113}}%
\def\kappa{{\Greekmath 0114}}%
\def\lambda{{\Greekmath 0115}}%
\def\mu{{\Greekmath 0116}}%
\def\nu{{\Greekmath 0117}}%
\def\xi{{\Greekmath 0118}}%
\def\pi{{\Greekmath 0119}}%
\def\rho{{\Greekmath 011A}}%
\def\sigma{{\Greekmath 011B}}%
\def\tau{{\Greekmath 011C}}%
\def\upsilon{{\Greekmath 011D}}%
\def\phi{{\Greekmath 011E}}%
\def\chi{{\Greekmath 011F}}%
\def\psi{{\Greekmath 0120}}%
\def\omega{{\Greekmath 0121}}%
\def\varepsilon{{\Greekmath 0122}}%
\def\vartheta{{\Greekmath 0123}}%
\def\varpi{{\Greekmath 0124}}%
\def\varrho{{\Greekmath 0125}}%
\def\varsigma{{\Greekmath 0126}}%
\def\varphi{{\Greekmath 0127}}%
\def\nabla{{\Greekmath 0272}}
\def\FindBoldGroup{%
   {\setbox0=\hbox{$\mathbf{x\global\edef\theboldgroup{\the\mathgroup}}$}}%
}
\def\Greekmath#1#2#3#4{%
    \if@compatibility
        \ifnum\mathgroup=\symbold
           \mathchoice{\mbox{\boldmath$\displaystyle\mathchar"#1#2#3#4$}}%
                      {\mbox{\boldmath$\textstyle\mathchar"#1#2#3#4$}}%
                      {\mbox{\boldmath$\scriptstyle\mathchar"#1#2#3#4$}}%
                      {\mbox{\boldmath$\scriptscriptstyle\mathchar"#1#2#3#4$}}%
        \else
           \mathchar"#1#2#3#4%
        \fi 
    \else 
        \FindBoldGroup
        \ifnum\mathgroup=\theboldgroup 
           \mathchoice{\mbox{\boldmath$\displaystyle\mathchar"#1#2#3#4$}}%
                      {\mbox{\boldmath$\textstyle\mathchar"#1#2#3#4$}}%
                      {\mbox{\boldmath$\scriptstyle\mathchar"#1#2#3#4$}}%
                      {\mbox{\boldmath$\scriptscriptstyle\mathchar"#1#2#3#4$}}%
        \else
           \mathchar"#1#2#3#4%
        \fi     	    
	  \fi}
\newif\ifGreekBold  \GreekBoldfalse
\let\SAVEPBF=\pbf
\def\pbf{\GreekBoldtrue\SAVEPBF}%
  \newcounter{equationnumber}  
  \def\mathletters{%
     \addtocounter{equation}{1}
     \edef\@currentlabel{\theequation}%
     \setcounter{equationnumber}{\c@equation}
     \setcounter{equation}{0}%
     \edef\theequation{\@currentlabel\noexpand\alph{equation}}%
  }
    \def\BibTeX{{\rm B\kern-.05em{\sc i\kern-.025em b}\kern-.08em
                 T\kern-.1667em\lower.7ex\hbox{E}\kern-.125emX}}}{}%
\def\AmS{{\protect\usefont{OMS}{cmsy}{m}{n}%
                A\kern-.1667em\lower.5ex\hbox{M}\kern-.125emS}}}{}%
\def\@@eqncr{\let\@tempa\relax
    \ifcase\@eqcnt \def\@tempa{& & &}\or \def\@tempa{& &}%
      \else \def\@tempa{&}\fi
     \@tempa
     \if@eqnsw
        \iftag@
           \@taggnum
        \else
           \@eqnnum\stepcounter{equation}%
        \fi
     \fi
     \global\tag@false
     \global\@eqnswtrue
     \global\@eqcnt\z@\cr}
\def\TCItag{\@ifnextchar*{\@TCItagstar}{\@TCItag}}
\def\@TCItag#1{%
    \global\tag@true
    \global\def\@taggnum{(#1)}%
    \global\def\@currentlabel{#1}}
\def\@TCItagstar*#1{%
    \global\tag@true
    \global\def\@taggnum{#1}%
    \global\def\@currentlabel{#1}}
\def\tint{\msi@int\textstyle\int}%
\def\tiint{\msi@int\textstyle\iint}%
\def\tiiint{\msi@int\textstyle\iiint}%
\def\tiiiint{\msi@int\textstyle\iiiint}%
\def\tidotsint{\msi@int\textstyle\idotsint}%
\def\toint{\msi@int\textstyle\oint}%
\newtoks\temptoksa
\newtoks\temptoksb
\newtoks\temptoksc
\def\msi@int#1#2{%
 \def\@temp{{#1#2\the\temptoksc_{\the\temptoksa}^{\the\temptoksb}}}%
 \futurelet\@nextcs
 \@int
}
\def\@int{%
   \ifx\@nextcs\limits
      \typeout{Found limits}%
      \temptoksc={\limits}%
	  \let\@next\@intgobble%
   \else\ifx\@nextcs\nolimits
      \typeout{Found nolimits}%
      \temptoksc={\nolimits}%
	  \let\@next\@intgobble%
   \else
      \typeout{Did not find limits or no limits}%
      \temptoksc={}%
      \let\@next\msi@limits%
   \fi\fi
   \@next   
}%
\def\@intgobble#1{%
   \typeout{arg is #1}%
   \msi@limits
}
\def\msi@limits{%
   \temptoksa={}%
   \temptoksb={}%
   \@ifnextchar_{\@limitsa}{\@limitsb}%
}
\def\@limitsa_#1{%
   \temptoksa={#1}%
   \@ifnextchar^{\@limitsc}{\@temp}%
}
\def\@limitsb{%
   \@ifnextchar^{\@limitsc}{\@temp}%
}
\def\@limitsc^#1{%
   \temptoksb={#1}%
   \@ifnextchar_{\@limitsd}{\@temp}%
}
\def\@limitsd_#1{%
   \temptoksa={#1}%
   \@temp
}
\def\dint{\msi@int\displaystyle\int}%
\def\diint{\msi@int\displaystyle\iint}%
\def\diiint{\msi@int\displaystyle\iiint}%
\def\diiiint{\msi@int\displaystyle\iiiint}%
\def\didotsint{\msi@int\displaystyle\idotsint}%
\def\doint{\msi@int\displaystyle\oint}%
\def\ExitTCILatex{\makeatother }
\if@compatibility\message{amsmath already loaded}\fi\aftergroup\ExitTCILatex}
\if@compatibility\message{amstex already loaded}\fi\aftergroup\ExitTCILatex}
\if@compatibility\message{amsgen already loaded}\fi\aftergroup\ExitTCILatex}
\let\DOTSI\relax
\def\RIfM@{\relax\ifmmode}%
\def\FN@{\futurelet\next}%
\def\iint{\DOTSI\intno@\tw@\FN@\ints@}%
\def\iiint{\DOTSI\intno@\thr@@\FN@\ints@}%
\def\iiiint{\DOTSI\intno@4 \FN@\ints@}%
\def\idotsint{\DOTSI\intno@\z@\FN@\ints@}%
\def\ints@{\findlimits@\ints@@}%
\newif\iflimtoken@
\newif\iflimits@
\def\findlimits@{\limtoken@true\ifx\next\limits\limits@true
 \else\ifx\next\nolimits\limits@false\else
 \limtoken@false\ifx\ilimits@\nolimits\limits@false\else
 \ifinner\limits@false\else\limits@true\fi\fi\fi\fi}%
\def\multint@{\int\ifnum\intno@=\z@\intdots@                          
 \else\intkern@\fi                                                    
 \ifnum\intno@>\tw@\int\intkern@\fi                                   
 \ifnum\intno@>\thr@@\int\intkern@\fi                                 
 \int}
\def\multintlimits@{\intop\ifnum\intno@=\z@\intdots@\else\intkern@\fi
 \ifnum\intno@>\tw@\intop\intkern@\fi
 \ifnum\intno@>\thr@@\intop\intkern@\fi\intop}%
\def\intic@{%
    \mathchoice{\hskip.5em}{\hskip.4em}{\hskip.4em}{\hskip.4em}}%
\def\negintic@{\mathchoice
 {\hskip-.5em}{\hskip-.4em}{\hskip-.4em}{\hskip-.4em}}%
\def\ints@@{\iflimtoken@                                              
 \def\ints@@@{\iflimits@\negintic@
   \mathop{\intic@\multintlimits@}\limits                             
  \else\multint@\nolimits\fi                                          
  \eat@}
 \else                                                                
 \def\ints@@@{\iflimits@\negintic@
  \mathop{\intic@\multintlimits@}\limits\else
  \multint@\nolimits\fi}\fi\ints@@@}%
\def\intkern@{\mathchoice{\!\!\!}{\!\!}{\!\!}{\!\!}}%
\def\plaincdots@{\mathinner{\cdotp\cdotp\cdotp}}%
\def\intdots@{\mathchoice{\plaincdots@}%
 {{\cdotp}\mkern1.5mu{\cdotp}\mkern1.5mu{\cdotp}}%
 {{\cdotp}\mkern1mu{\cdotp}\mkern1mu{\cdotp}}%
 {{\cdotp}\mkern1mu{\cdotp}\mkern1mu{\cdotp}}}%
\def\RIfM@{\relax\protect\ifmmode}
\def\text{\RIfM@\expandafter\text@\else\expandafter\mbox\fi}
\let\nfss@text\text
\def\text@#1{\mathchoice
   {\textdef@\displaystyle\f@size{#1}}%
   {\textdef@\textstyle\tf@size{\firstchoice@false #1}}%
   {\textdef@\textstyle\sf@size{\firstchoice@false #1}}%
   {\textdef@\textstyle \ssf@size{\firstchoice@false #1}}%
   \glb@settings}
\def\textdef@#1#2#3{\hbox{{%
                    \everymath{#1}%
                    \let\f@size#2\selectfont
                    #3}}}
\newif\iffirstchoice@
\def\Let@{\relax\iffalse{\fi\let\\=\cr\iffalse}\fi}%
\def\vspace@{\def\vspace##1{\crcr\noalign{\vskip##1\relax}}}%
\def\multilimits@{\bgroup\vspace@\Let@
 \baselineskip\fontdimen10 \scriptfont\tw@
 \advance\baselineskip\fontdimen12 \scriptfont\tw@
 \lineskip\thr@@\fontdimen8 \scriptfont\thr@@
 \lineskiplimit\lineskip
 \vbox\bgroup\ialign\bgroup\hfil$\m@th\scriptstyle{##}$\hfil\crcr}%
\def\Sb{_\multilimits@}%
\def\endSb{\crcr\egroup\egroup\egroup}%
\def\Sp{^\multilimits@}%
\newdimen\ex@
\def\rightarrowfill@#1{$#1\m@th\mathord-\mkern-6mu\cleaders
 \hbox{$#1\mkern-2mu\mathord-\mkern-2mu$}\hfill
 \mkern-6mu\mathord\rightarrow$}%
\def\leftarrowfill@#1{$#1\m@th\mathord\leftarrow\mkern-6mu\cleaders
 \hbox{$#1\mkern-2mu\mathord-\mkern-2mu$}\hfill\mkern-6mu\mathord-$}%
\def\leftrightarrowfill@#1{$#1\m@th\mathord\leftarrow
\mkern-6mu\cleaders
 \hbox{$#1\mkern-2mu\mathord-\mkern-2mu$}\hfill
 \mkern-6mu\mathord\rightarrow$}%
\def\overrightarrow{\mathpalette\overrightarrow@}%
\def\overrightarrow@#1#2{\vbox{\ialign{##\crcr\rightarrowfill@#1\crcr
 \noalign{\kern-\ex@\nointerlineskip}$\m@th\hfil#1#2\hfil$\crcr}}}%
\def\overleftarrow{\mathpalette\overleftarrow@}%
\def\overleftarrow@#1#2{\vbox{\ialign{##\crcr\leftarrowfill@#1\crcr
 \noalign{\kern-\ex@\nointerlineskip}$\m@th\hfil#1#2\hfil$\crcr}}}%
\def\overleftrightarrow{\mathpalette\overleftrightarrow@}%
\def\overleftrightarrow@#1#2{\vbox{\ialign{##\crcr
   \leftrightarrowfill@#1\crcr
 \noalign{\kern-\ex@\nointerlineskip}$\m@th\hfil#1#2\hfil$\crcr}}}%
\def\underrightarrow{\mathpalette\underrightarrow@}%
\def\underrightarrow@#1#2{\vtop{\ialign{##\crcr$\m@th\hfil#1#2\hfil
  $\crcr\noalign{\nointerlineskip}\rightarrowfill@#1\crcr}}}%
\def\underleftarrow{\mathpalette\underleftarrow@}%
\def\underleftarrow@#1#2{\vtop{\ialign{##\crcr$\m@th\hfil#1#2\hfil
  $\crcr\noalign{\nointerlineskip}\leftarrowfill@#1\crcr}}}%
\def\underleftrightarrow{\mathpalette\underleftrightarrow@}%
\def\underleftrightarrow@#1#2{\vtop{\ialign{##\crcr$\m@th
  \hfil#1#2\hfil$\crcr
 \noalign{\nointerlineskip}\leftrightarrowfill@#1\crcr}}}%
\def\qopnamewl@#1{\mathop{\operator@font#1}\nlimits@}
\let\nlimits@\displaylimits
\def\setboxz@h{\setbox\z@\hbox}
\def\varlim@#1#2{\mathop{\vtop{\ialign{##\crcr
 \hfil$#1\m@th\operator@font lim$\hfil\crcr
 \noalign{\nointerlineskip}#2#1\crcr
 \noalign{\nointerlineskip\kern-\ex@}\crcr}}}}
 \def\rightarrowfill@#1{\m@th\setboxz@h{$#1-$}\ht\z@\z@
  $#1\copy\z@\mkern-6mu\cleaders
  \hbox{$#1\mkern-2mu\box\z@\mkern-2mu$}\hfill
  \mkern-6mu\mathord\rightarrow$}
\def\leftarrowfill@#1{\m@th\setboxz@h{$#1-$}\ht\z@\z@
  $#1\mathord\leftarrow\mkern-6mu\cleaders
  \hbox{$#1\mkern-2mu\copy\z@\mkern-2mu$}\hfill
  \mkern-6mu\box\z@$}
\def\projlim{\qopnamewl@{proj\,lim}}
\def\injlim{\qopnamewl@{inj\,lim}}
\def\varinjlim{\mathpalette\varlim@\rightarrowfill@}
\def\varprojlim{\mathpalette\varlim@\leftarrowfill@}
\def\varliminf{\mathpalette\varliminf@{}}
\def\varliminf@#1{\mathop{\underline{\vrule\@depth.2\ex@\@width\z@
   \hbox{$#1\m@th\operator@font lim$}}}}
\def\varlimsup{\mathpalette\varlimsup@{}}
\def\varlimsup@#1{\mathop{\overline
  {\hbox{$#1\m@th\operator@font lim$}}}}
\def\align{\@verbatim \frenchspacing\@vobeyspaces \@alignverbatim
You are using the "align" environment in a style in which it is not defined.}
\let\csname endalign*\endcsname =\endtrivlist
\def\alignat{\@verbatim \frenchspacing\@vobeyspaces \@alignatverbatim
You are using the "alignat" environment in a style in which it is not defined.}
\let\csname endalignat*\endcsname =\endtrivlist
\def\xalignat{\@verbatim \frenchspacing\@vobeyspaces \@xalignatverbatim
You are using the "xalignat" environment in a style in which it is not defined.}
\let\csname endxalignat*\endcsname =\endtrivlist
\def\gather{\@verbatim \frenchspacing\@vobeyspaces \@gatherverbatim
You are using the "gather" environment in a style in which it is not defined.}
\let\csname endgather*\endcsname =\endtrivlist
\def\multiline{\@verbatim \frenchspacing\@vobeyspaces \@multilineverbatim
You are using the "multiline" environment in a style in which it is not defined.}
\let\csname endmultiline*\endcsname =\endtrivlist
\def\arrax{\@verbatim \frenchspacing\@vobeyspaces \@arraxverbatim
You are using a type of "array" construct that is only allowed in AmS-LaTeX.}
\def\tabulax{\@verbatim \frenchspacing\@vobeyspaces \@tabulaxverbatim
You are using a type of "tabular" construct that is only allowed in AmS-LaTeX.}
\let\csname endarrax*\endcsname =\endtrivlist
\let\csname endtabulax*\endcsname =\endtrivlist
 \def\endequation{%
     \ifmmode\ifinner 
      \iftag@
        \addtocounter{equation}{-1} 
        $\hfil
           \displaywidth\linewidth\@taggnum\egroup \endtrivlist
        \global\tag@false
        \global\@ignoretrue   
      \else
        $\hfil
           \displaywidth\linewidth\@eqnnum\egroup \endtrivlist
        \global\tag@false
        \global\@ignoretrue 
      \fi
     \else   
      \iftag@
        \addtocounter{equation}{-1} 
        \eqno \hbox{\@taggnum}
        \global\tag@false%
        $$\global\@ignoretrue
      \else
        \eqno \hbox{\@eqnnum}
        $$\global\@ignoretrue
      \fi
     \fi\fi
 } 
 \newif\iftag@ \tag@false
 \def\TCItag{\@ifnextchar*{\@TCItagstar}{\@TCItag}}
 \def\@TCItag#1{%
     \global\tag@true
     \global\def\@taggnum{(#1)}%
     \global\def\@currentlabel{#1}}
 \def\@TCItagstar*#1{%
     \global\tag@true
     \global\def\@taggnum{#1}%
     \global\def\@currentlabel{#1}}
     \def\tag{\@ifnextchar*{\@tagstar}{\@tag}}
     \def\@tag#1{%
         \global\tag@true
         \global\def\@taggnum{(#1)}}
     \def\@tagstar*#1{%
         \global\tag@true
         \global\def\@taggnum{#1}}
\begin{document}

\title{How to Obtain Computational Completeness in \\
P Systems with One Catalyst}

\author{Rudolf Freund 
\institute{
Technische Universit\"{a}t Wien, Institut f\"{u}r Computersprachen\\
Favoritenstr. 9, A-1040 Wien, Austria}
\email{rudi@emcc.at}
\and Gheorghe P\u{a}un 
\institute{Institute of Mathematics of the Romanian Academy\\
PO Box 1-764, 014700 Bucure\c sti, Romania, and\\[2mm]
Department of Computer Science and Artificial Intelligence\\
University of Sevilla\\
Avda. Reina Mercedes s/n, 41012 Sevilla, Spain}
\email{gpaun@us.es, ghpaun@gmail.com}
}

\def\titlerunning{How to Obtain Computational Completeness in 
P Systems with One Catalyst} 
\def\authorrunning{Rudolf Freund and Gheorghe P\u{a}un}
\maketitle

\begin{abstract}
Whether P systems with only one catalyst can already be computationally
complete, is still an open problem. Here we establish computational
completeness by using specific variants of additional control mechanisms. At
each step using only multiset rewriting rules from one set of a finite
number of sets of multiset rewriting rules allows for obtaining
computational completeness with one catalyst and only one membrane. If the
targets are used for choosing the multiset of rules to be applied, for
getting computational completeness with only one catalyst more than one
membrane is needed. If the available sets of rules change periodically with
time, computational completeness can be obtained with one catalyst in one
membrane. Moreover, we also improve existing computational completeness
results for P systems with mobile catalysts and for P systems with membrane
creation.
\end{abstract}

\section{Introduction}

P systems with catalytic rules were already considered in the originating
papers for membrane systems, see \cite{cell1}. In \cite{Freundetal2005} two
catalysts were shown to be sufficient for getting computational completeness
(throughout this paper, with this notion we will indicate that all
recursively enumerable sets of (vectors of) non-negative integers can be
generated). Since then, it has become one of the most challenging open
problems in the area of P systems, whether or not one catalyst might already
be enough to obtain computational completeness.

Using additional control mechanisms as, for example, priorities or
promoters/inhibitors, P systems with only one catalyst can be shown to be
computationally complete, e.g., see Chapter 4 of \cite{handMC}. On the other
hand, additional features for the catalyst may be taken into account; for
example, we may use bi-stable catalysts (catalysts switching between two
different states) or mobile catalysts (catalysts able to cross membranes).
Moreover, additional membrane features may be used, for example, membrane
creation or controlling the membrane permeability by means of the operations 
$\delta $ and $\tau $.

P systems with mobile catalysts were introduced in \cite{KrishnaAndrei}, and
their computational completeness was proved with using three membranes and
targets of the forms $here$, $out$, and $in_{j}$. We here improve this
result by replacing the targets $in_{j}$ with the weaker one $in$.

P systems with membrane creation were introduced in \cite{MadhuKamala},
showing both their computational completeness and efficiency (the
Hamiltonian path problem is solved in linear time in a semi-uniform way;
this result was improved in \cite{MiguelMario}, where a polynomial solution
to the Subset Sum problem in a uniform way is provided). For proving
computational completeness, in \cite{MadhuKamala} (Theorem 2) P systems
starting with one membrane, having four membranes at some time during the
computation, using one catalyst, and also controlling the membrane
permeability by means of the operations $\delta $ and $\tau $ are needed.
However, as already shown in \cite{deltatau}, P systems with one catalyst
and using the operations $\delta $ and $\tau $ are computationally complete,
hence, the membrane creation facility is not necessary for getting
computational completeness in this framework. Here we improve the result
shown in \cite{MadhuKamala} from two points of view: (i) the control of
membrane permeability is not used, and (ii) the maximal number of membranes
used during a computation is two.

Recenty, several variants of P systems using only one catalyst together with
control mechanisms for choosing the rules applicable in a computation step
have been considered: for example, in \cite{control} the rules are labeled
with elements from an alphabet $H$ and in each step a maximal multiset of
rules having the same label from $H$ is applied. In this paper, we will give
a short proof for the computational completeness of these \textit{P systems
with label selection }with only one catalyst in a single membrane. As a
specific variant, for each membrane we can choose the rules according to the
targets, and we will prove computational completeness for these \textit{P
systems with target selection }with only one catalyst, but needing more than
one membrane (such systems with only one membrane lead to the still open
problem of catalytic P systems with one catalyst).

Regular control languages were considered already in \cite{control} for the
maximally parallel derivation mode, whereas in \cite{TimeVaryingP}
computational completeness was proved for the sequential mode: there even
only non-cooperative rules were needed in one membrane for time-varying P
systems to obtain computational completeness (in time-varying systems, the
set of available rules varies periodically with time, i.e., the regular
control language is of the very specific form $W=\left( U_{1}\dots
U_{p}\right) ^{\ast }$, allowing to apply rules from a set $U_{i}$ in the
computation step $pn+i$, $n\geq 0$; $p$ is called the \textit{period}), but
a bounded number of steps without applying any rule had to be allowed. We
here prove that \textit{time-varying P systems} using the maximally parallel
derivation mode in one membrane with only one catalyst are computationally
complete with a period of six and the usual halting when no rule can be
applied. 

The new results exhibited in this paper first were presented in \cite%
{FreundPaun2013}. For the newest developments in the area of P 
systems we refer the reader to the P systems website \cite{webP}.

\section{Prerequisites}

The set of non-negative integers is denoted by 
$\mathbb{N}$. An \textit{alphabet }$V$
is a finite non-empty set of abstract \textit{symbols}. Given $V$, the free
monoid generated by $V$ under the operation of concatenation is denoted by $%
V^{\ast }$; the elements of $V^{\ast }$ are called strings, and the \textit{%
empty string} is denoted by $\lambda $; $V^{\ast }\setminus \left\{ \lambda
\right\} $ is denoted by $V^{+}$. Let $\left\{ a_{1},\cdots ,a_{n}\right\} $
be an arbitrary alphabet; the number of occurrences of a symbol $a_{i}$ in a
string $x$ is denoted by $\left\vert x\right\vert _{a_{i}}$. For a fixed
sequence $\left\langle a_{1},\cdots ,a_{n}\right\rangle $ of the symbols in
the alphabet $\left\{ a_{1},\cdots ,a_{n}\right\} $,  the \textit{Parikh
vector} associated with $x$ with respect to $\left\langle a_{1},\cdots
,a_{n}\right\rangle $ is $\left( \left\vert x\right\vert _{a_{1}},\cdots
,\left\vert x\right\vert _{a_{n}}\right) $; the \textit{Parikh image} of a
language $L$ over $\left\{ a_{1},\cdots ,a_{n}\right\} $ is the set of all
Parikh vectors of strings in $L$, and we denote it by $Ps\left( L\right) $.
For a family of languages $FL$, the family of Parikh images of languages in $%
FL$ is denoted by $PsFL$; for families of languages of a one-letter
alphabet, the corresponding sets of non-negative integers are denoted by $NFL
$.

A (finite) multiset over the (finite) alphabet $V$, $V=\left\{ a_{1},\cdots
,a_{n}\right\} $, is a mapping $f:V\longrightarrow \mathbb{N}$ and
represented by $\left\langle f\left( a_{1}\right) ,a_{1}\right\rangle \cdots
\left\langle f\left( a_{n}\right) ,a_{n}\right\rangle $ or by any string $x$
the Parikh vector of which with respect to $\left\langle a_{1},\cdots
,a_{n}\right\rangle $ is 
$\left( f\left( a_{1}\right) ,\cdots ,f\left( a_{n}\right) \right) $. In the
following we will not distinguish between a vector $\left( m_{1},\cdots
,m_{n}\right) ,$ its representation by a multiset $\left\langle
m_{1},a_{1}\right\rangle \cdots \left\langle m_{n},a_{n}\right\rangle $ or
its representation by a string $x$ having the Parikh vector $\left(
\left\vert x\right\vert _{a_{1}},\cdots ,\left\vert x\right\vert
_{a_{n}}\right) =\left( m_{1},\cdots ,m_{n}\right) $. For a fixed sequence $%
\left\langle a_{1},\cdots ,a_{n}\right\rangle $ of the symbols in the
alphabet $\left\{ a_{1},\cdots ,a_{n}\right\} $, the representation of the
multiset $\left\langle m_{1},a_{1}\right\rangle \cdots \left\langle
m_{n},a_{n}\right\rangle $ by the string $a_{1}^{m_{1}}\cdots a_{n}^{m_{n}}$
is unique. 

The family of regular and recursively enumerable string languages is denoted
by $REG$ and $RE$, respectively. For more details of formal language theory
the reader is referred to the monographs and handbooks in this area as \cite%
{DassowPaun1989} and \cite{HandbookFormalLanguages1997}.

\smallskip

A \emph{register machine} is a tuple $M=\left( m,B,l_{0},l_{h},P\right) $,
where $m$ is the number of registers, $P$ is the set of instructions
bijectively labeled by elements of $B$, $l_{0}\in B$ is the initial label,
and $l_{h}\in B$ is the final label. The instructions of $M$ can be of the
following forms:

\begin{itemize}
\item $l_{1}:\left( \mathtt{ADD}\left( j\right) ,l_{2},l_{3}\right) $, with $l_{1}\in
B\setminus \left\{ l_{h}\right\} $, $l_{2},l_{3}\in B$, $1\leq j\leq m$.
\newline
Increase the value of register $j$ by one, and non-deterministically jump to
instruction $l_{2}$ or $l_{3}$. This instruction is usually called \emph{%
increment}.

\item $l_{1}:\left( \mathtt{SUB}\left( j\right) ,l_{2},l_{3}\right) $, with $l_{1}\in
B\setminus \left\{ l_{h}\right\} $, $l_{2},l_{3}\in B$, $1\leq j\leq m$.
\newline
If the value of register $j$ is zero then jump to instruction $l_{3}$,
otherwise decrease the value of register $j$ by one and jump to instruction $%
l_{2}$. The two cases of this instruction are usually called \emph{zero-test}
and \emph{decrement}, respectively.

\item $l_{h}: \mathtt{HALT}$. Stop the execution of instructions of the register
machine.
\end{itemize}

A \emph{configuration} of a register machine is described by the contents of
each register and by the value of the current label, which indicates the
next instruction to be executed. Computations start by executing the first
instruction of $P$ (labeled with $l_{0}$), and terminate with reaching the 
$\mathtt{HALT}$ instruction.

Register machines provide a simple computing model which
is computationally complete (e.g., see \cite{Minsky1967}). 
For generating sets of vectors of non-negative
integers, we start with empty registers, use the first two registers for the
necessary computations and take as results the contents of the $k$ registers 
$3$ to $k+2$ in all possible halting computations; during a computation of $M
$, only the registers $1$ and $2$ can be decremented, and moreover, we
assume the registers $1$ and $2$ to be empty at the end of a halting
computation. In the following, we shall call a specific model of P systems 
\emph{computationally complete} if and only if for any register machine $M$
we can effectively construct an equivalent P system $\Pi $ of that type
simulating each step of $M$ in a bounded number of steps and yielding the
same results.

\subsection{P Systems}

\label{Psystems}

The basic ingredients of a (cell-like) P system are the membrane structure,
the objects placed in the membrane regions, and the evolution rules. The 
\textit{membrane structure} is a hierarchical arrangement of membranes. Each
membrane defines a \textit{region/compartment}, the space between the
membrane and the immediately inner membranes; the outermost membrane is
called the \textit{skin membrane}, the region outside is the \textit{%
environment}, also indicated by (the label) $0$. Each membrane can be
labeled, and the label (from a set $Lab$) will identify both the membrane
and its region. The membrane structure can be represented by a rooted tree
(with the label of a membrane in each node and the skin in the root), but
also by an expression of correctly nested labeled parentheses. The \textit{%
objects} (multisets) are placed in the compartments of the membrane
structure and usually represented by strings, with the multiplicity of a
symbol corresponding to the number of occurrences of that symbol in the
string. \textit{The evolution rules} are multiset rewriting rules of the
form $u\rightarrow v$, where $u$ is a multiset of objects from a given set $O
$ and $v=\left( b_{1},tar_{1}\right) \dots \left( b_{k},tar_{k}\right) $
with $b_{i}\in O$ and $tar_{i}\in \left\{ here,out,in\right\} $ or $%
tar_{i}\in \left\{ here,out\right\} \cup \left\{ in_{j}\mid j\in Lab\right\} 
$, $1\leq i\leq k$. Using such a rule means \textquotedblleft consuming" the
objects of $u$ and \textquotedblleft producing" the objects $b_{1},\dots
,b_{k}$ of $v$; the \textit{target indications} (\textit{targets} for short) 
$here$, $out$, and $in$ mean that an object with the target $here$ remains
in the same region where the rule is applied, an object with the target $out$
is sent out of the respective membrane (in this way, objects can also be
sent to the environment, when the rule is applied in the skin region), while
an object with the target $in$ is sent to one of the immediately inner
membranes, non-deterministically chosen, wheras with $in_{j}$ this inner
membrane can be specified directly. Usually, we omit the target $here$. With
respect to the tree representation of the membrane structure of the P
system, the target $out$ means moving the object to the region represented
by the parent node, and the target $in$ means moving the object to a region
represented by one of the children nodes; with the target $in_{j}$ we can
directly specify which of the children nodes is to be chosen.

Formally, a (cell-like) P system is a construct 
\begin{equation*}
\Pi =(O,\mu ,w_{1},\dots ,w_{m},R_{1},\dots ,R_{m},f)
\end{equation*}%
where $O$ is the alphabet of objects, $\mu $ is the membrane structure (with 
$m$ membranes), $w_{1},\dots ,w_{m}$ are multisets of objects present in the 
$m$ regions of $\mu $ at the beginning of a computation , $R_{1},\dots
,R_{m} $ are finite sets of evolution rules, associated with the regions of $%
\mu $, and $f$ is the label of the membrane region from which the outputs
are taken ($f=0$ indicates that the output is taken from the environment).

If a rule $u\rightarrow v$ has at least two objects in $u$, then it is
called \textit{cooperative}, otherwise it is called \textit{non-cooperative}. 
In \textit{catalytic P systems} we use non-cooperative as well as \textit{%
catalytic rules} which are of the form $ca\rightarrow cv$, where $c$ is a
special object -- a so-called \textit{catalyst} -- which never evolves and
never passes through a membrane (both these restrictions can be relaxed),
but it just assists object $a$ to evolve to the multiset $v$. In a \textit{%
purely catalytic P system} we only allow catalytic rules. In both catalytic
and purely catalytic P systems we replace $O$ by $O,C$ in order to specify
those objects from $O$ which are the catalysts in the set $C$, i.e., we
write 
\begin{equation*}
\Pi =(O,C,\mu ,w_{1},\dots ,w_{m},R_{1},\dots ,R_{m},f).
\end{equation*}

The evolution rules are used in the \textit{non-deterministic maximally
parallel} way, i.e., in any computation step of $\Pi $ we choose a multiset
of rules from the sets $R_{1},\dots ,R_{m}$ in such a way that no further
rule can be added to it so that the obtained multiset would still be
applicable to the existing objects in the membrane regions $1,\dots ,m$.

The membranes and the objects present in the compartments of a system at a
given time form a \textit{configuration}; starting from a given \textit{%
initial configuration} and using the rules as explained above, we get 
\textit{transitions} among configurations; a sequence of transitions forms a 
\textit{computation}. A computation is \textit{halting} if it reaches a
configuration where no rule can be applied anymore. With a halting
computation we associate a \textit{result}, in the form of the number of
objects present in membrane $f$ in the halting configuration. The set of
non-negative integers and the set of (Parikh) vectors of non-negative
integers obtained as results of halting computations in $\Pi $ are denoted
by $N\left( \Pi \right) $ and $Ps\left( \Pi \right) $, respectively.

\smallskip

The family of sets $Y\left( \Pi \right) $, $Y\in \left\{ N,Ps\right\} $,
computed by P systems with at most $m$ membranes and cooperative rules and
with non-cooperative rules is denoted by $YOP_{m}\left( coop\right) $ and $%
YOP_{m}\left( ncoo\right) $, respectively. It is well known that for any $%
m\geq 1$, $YREG=YOP_{m}\left( ncoo\right) \subset NOP_{m}\left( coop\right)
=YRE$, see \cite{cell1}.

\smallskip

The family of sets $Y\left( \Pi \right) $, $Y\in \left\{ N,Ps\right\} $,
computed by (purely) catalytic P systems with at most $m$ membranes and at
most $k$ catalysts is denoted by $YOP_{m}\left( cat_{k}\right) $ ($%
YOP_{m}\left( pcat_{k}\right) $); from \cite{Freundetal2005} we know that,
with the results being sent to the environment, we have $YOP_{1}\left(
cat_{2}\right) =YOP_{1}\left( pcat_{3}\right) =YRE$.

\smallskip

If we allow catalysts to move from one membrane region to another one, then
we speak of \textit{P systems with mobile catalysts}. The families of sets $%
N\left( \Pi \right) $ and $Ps\left( \Pi \right) $ computed by P systems with
at most $m$ membranes and $k$ mobile catalysts are denoted by $NOP_{m}\left(
mcat_{k}\right) $ and $PsOP_{m}\left( mcat_{k}\right) $, respectively.

\smallskip

For all the variants of P systems using rules of some type $X$ as defined
above, we may consider systems containing only rules of the form $%
u\rightarrow v$ where $u\in O$ and $v=\left( b_{1},tar\right) \dots \left(
b_{k},tar\right) $ with $b_{i}\in O$ and $tar\in \left\{ here,out,in\right\} 
$ or $tar\in \left\{ here,out\right\} \cup \left\{ in_{j}\mid j\in H\right\} 
$, $1\leq i\leq k$, i.e., in each rule there is only one target for all
objects $b_{i}$; moreover, with the target $in$ we assume all objects
generated by the rules of the chosen multiset of rules applied to the
objects in a specific region of the current configuration to choose the same
inner membrane. If \textit{catalytic rules} are considered, then we request
the rules to be of the form $ca\rightarrow \left( c,here\right) \left(
b_{1},here\right) \dots \left( b_{k},here\right) $, as the catalyst is not
allowed to move. \textit{P systems with target selection} contain only these
forms of rules; moreover, in each computation step, for each membrane region 
$i$ we choose a non-empty multiset (if it exists) of rules $R_{i}^{\prime }$
from $R_{i}$ having the same target $tar$ -- for different membranes these
targets may be different -- and apply $R_{i}^{\prime }$ in the maximally
parallel way, i.e., the set $R_{i}^{\prime }$ cannot be extended by any
further rule from $R_{i}$ with the target $tar$ so that the obtained
multiset of rules would still be applicable to the existing objects in the
membrane region $i$. The family of sets $N\left( \Pi \right) $ and $Ps\left(
\Pi \right) $ computed by P systems with target selection with at most $m$
membranes and rules of type $X$ is denoted by $NOP_{m}\left( X,ts\right) $
and $PsOP_{m}\left( X,ts\right) $, respectively.

\smallskip

For all the variants of P systems of type $X$, we may consider to label all
the rules in the sets $R_{1},\dots ,R_{m}$ in a one-to-one manner by labels
from a set $H$ and to take a set $W$ containing subsets of $H$. Then a 
\textit{P system with label selection }is a construct 
\begin{equation*}
\Pi =(O,\mu ,w_{1},\dots ,w_{m},R_{1},\dots ,R_{m},H,W,f)
\end{equation*}%
where $\Pi ^{\prime }=(O,\mu ,w_{1},\dots ,w_{m},R_{1},\dots ,R_{m},f)$ is a
P system as defined above, $H$ is a set of labels for the rules in the sets $%
R_{1},\dots ,R_{m}$, and $W\subseteq 2^{H}$. In any transition step in $\Pi $
we first select a set of labels $U\in W$ and then apply a non-empty multiset 
$R$ of rules such that all the labels of these rules in $R$ are in $U$ in
the maximally parallel way, i.e., the set $R$ cannot be extended by any
further rule with a label from $U$ so that the obtained multiset of rules
would still be applicable to the existing objects in the membrane regions $%
1,\dots ,m$. The family of sets $N\left( \Pi \right) $ and $Ps\left( \Pi
\right) $ computed by P systems with label selection with at most $m$
membranes and rules of type $X$ is denoted by $NOP_{m}\left( X,ls\right) $
and $PsOP_{m}\left( X,ls\right) $, respectively.

Another method to control the application of the labeled rules is to use
control languages (see \cite{control} and \cite{TimeVaryingP}). A \textit{%
controlled P system} is a construct 
\begin{equation*}
\Pi =(O,\mu ,w_{1},\dots ,w_{m},R_{1},\dots ,R_{m},H,L,f)
\end{equation*}%
where $\Pi ^{\prime }=(O,\mu ,w_{1},\dots ,w_{m},R_{1},\dots ,R_{m},f)$ is a
P system as defined above, $H$ is a set of labels for the rules in the sets $%
R_{1},\dots ,R_{m}$, and $L$ is a string language over $2^{H}$ (each subset
of $H$ represents an element of the alphabet for $L$) from a family $FL$.
Every successful computation in $\Pi $ has to follow a control word $%
U_{1}\dots U_{n}\in L$: in transition step $i$, only rules with labels in $%
U_{i}$ are allowed to be applied (but again in the maximally parallel way,
i.e., we have to apply a multiset $R$ of rules with labels in $U_{i}$ which
cannot be extended by any rule with a label in $U_{i}$ such that the
resulting multiset would still be applicable), and after the $n$-th
transition, the computation halts; we may relax this end condition, i.e., we
may stop after the $i$-th transition for any $i\leq n$, and then we speak of 
\textit{weakly controlled P systems}. If $L=\left( U_{1}\dots U_{p}\right)
^{\ast }$, $\Pi $ is called a \textit{(weakly) time-varying P system}: in
the computation step $pn+i$, $n\geq 0$, rules from the set $U_{i}$ have to
be applied; $p$ is called the \textit{period}. The family of sets $Y\left(
\Pi \right) $, $Y\in \left\{ N,Ps\right\} $, computed by (weakly) controlled
P systems and (weakly) time-varying P systems with period $p$, with at most $%
m$ membranes and rules of type $X$ as well as control languages in $FL$ is
denoted by $YOP_{m}\left( X,C\left( FL\right) \right) $ ($YOP_{m}\left(
X,wC\left( FL\right) \right) $) and $YOP_{m}\left( X,TV_{p}\right) $ ($%
YOP_{m}\left( X,wTV_{p}\right) $), respectively.

\medskip

In the \textit{P systems with membrane creation} considered in this paper,
besides the catalytic rules $ca\rightarrow c\left( u,tar\right) $ and the
non-cooperative rules $a\rightarrow \left( u,tar\right) $ we also use
catalytic membrane creation rules of the form $ca\rightarrow c[\ u{{\ ]_{{}}}%
_{{}}}_{i}$ (in the context of $c$, from the object $a$ a new membrane with
label $i$ containing the multiset $u$ is generated) and membrane dissolution
rules $a\rightarrow u\delta $ (we assume that no objects can be sent into a
membrane which is going to be dissolved; with dissolving the membrane $i$ by
applying $\delta $, all objects contained inside this membrane are collected
in the region surrounding the dissolved membrane); in all cases, $c$ is a
catalyst, $a$ is an object, $u$ is a multiset, and $tar$ is a target of the
form $here$, $out$, and $in_{j}$. The family of sets $Y\left( \Pi \right) $, 
$Y\in \left\{ N,Ps\right\} $, computed by such P systems with membrane
creation and using at most $k$ catalysts, with $m$ initial membranes and
having at most $h$ membranes during its computations is denoted by $%
YP_{m,h}\left( cat_{k},mcre\right) $.

\section{Computational Completeness of P Systems with Label Selection}

\begin{theorem}
\label{TheoremLS} $YOP_{1}\left( cat_{1},ls\right) =YRE$, $Y\in \left\{
N,Ps\right\} $.
\end{theorem}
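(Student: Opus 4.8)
The plan is to prove the two inclusions separately. The inclusion $YOP_1(cat_1,ls)\subseteq YRE$ is the routine direction: a P system with label selection is an effectively presented rewriting device whose configurations and transitions are computable, so a Turing machine can enumerate its halting computations, and hence only recursively enumerable sets arise. I would therefore concentrate on the hard inclusion $YRE\subseteq YOP_1(cat_1,ls)$.

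For that inclusion I would start from a register machine $M=(m,B,l_0,l_h,P)$ as in the Prerequisites and build a P system $\Pi$ with a single membrane, the single catalyst $c$, one symbol $a_r$ per register $r$, the labels of $B$ used as state objects, a trap symbol $\#$, and a family $W\subseteq 2^{H}$ of admissible label sets. The contents of register $r$ is encoded as the multiplicity of $a_r$; exactly one state object is present between simulated instructions and names the next instruction to execute, starting from $l_0$. The rule $\#\to\#$ gets a dedicated label that I place into \emph{every} set of $W$, so that as soon as a single $\#$ is produced the selected multiset is forced to contain $\#\to\#$, $\#$ is regenerated forever, and the computation can never halt; this is the device that discards faulty nondeterministic guesses.

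Next I would give the gadgets. An increment $l_1:(\mathtt{ADD}(r),l_2,l_3)$ is simulated in one step by the two non-cooperative rules $l_1\to a_r l_2$ and $l_1\to a_r l_3$ grouped into one admissible label set: since only one copy of the state object $l_1$ is present, exactly one of them fires, realizing the nondeterministic jump while adding one $a_r$. The work is in the decrement/zero-test $l_1:(\mathtt{SUB}(r),l_2,l_3)$, which I would simulate in a constant number of steps sequenced by label selection. The key mechanism is that the single catalyst is the unique resource able to remove a copy of $a_r$: a catalytic rule of the form $c a_r\to c\,(\dots)$ can fire at most once per step, so one selected step decrements register $r$ by exactly one and, in doing so, emits a witness object precisely when $a_r$ was nonempty. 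A following selected label set then routes the state object to $l_2$ or $l_3$ according to the presence or absence of that witness, sending every inconsistent combination (claiming a decrement when the register was empty, or claiming emptiness while a copy survives) into the trap $\#$. Because each step applies only rules whose labels lie in the one chosen set of $W$, the rules belonging to different instructions never interfere, and maximal parallelism cannot trigger spurious decrements of registers other than $r$.

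Finally I would verify correctness and output. One shows by induction on the length of computations that the halting computations of $\Pi$ are exactly the images of the halting computations of $M$: each instruction is simulated in a bounded number of steps, the catalyst returns free and a single state object survives after each gadget, and the trap guarantees that only the faithful branch can ever halt. Reaching $l_h$ stops the simulation with registers $1$ and $2$ empty and the contents of registers $3,\dots,k+2$ recorded as the multiplicities of $a_3,\dots,a_{k+2}$, which are read as the result (the inert catalyst and any fixed auxiliary symbols being disregarded). The main obstacle throughout is the faithful zero-test with only one catalyst: unlike the two-catalyst construction, nothing structurally forbids the ``wrong'' maximally parallel choice, so the whole argument rests on designing the $\mathtt{SUB}$ label sets together with the trap so that every deviation from a correct simulation inevitably produces $\#$ and loops forever.
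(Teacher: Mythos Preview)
Your overall architecture is right and matches the paper's: simulate a register machine in one membrane with one catalyst, encode registers by multiplicities of $a_r$, carry a single state object, and use the trap $\#$ with $\#\to\#$ to kill wrong guesses. The $\mathtt{ADD}$ gadget is fine.

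The gap is in the $\mathtt{SUB}$ gadget. You describe a witness-based two-phase scheme (first try $ca_r\to c(\dots)$ to produce a witness, then route), and you correctly flag that the hard case is ``claiming a decrement when the register was empty''. But you never say \emph{how} that case is forced into $\#$. With only the rules you name, in the routing step the non-cooperative state rule (say $l_1'\to l_2$) can fire alone when no witness is present; that multiset is already maximal, because the catalytic rule needs the absent witness and nothing else in the selected set is applicable. The computation then proceeds to $l_2$ with register $r$ still empty, which is exactly the unfaithful branch you must exclude. Putting $\#\to\#$ into every label set does not help here, since no $\#$ has been produced yet.

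The paper closes this with a single, very short trick that also lets $\mathtt{SUB}$ be simulated in \emph{one} step rather than several: keep a permanent auxiliary object $d$ in the membrane and equip the ``decrement'' label set with \emph{both} catalytic rules $ca_r\to c$ and $cd\to c\#$ together with $l_i\to l_j$. Maximal parallelism then forces the catalyst to act; if some $a_r$ is available it may be erased, and if not the catalyst must take $cd\to c\#$, introducing the trap. The ``zero'' label set is $\{\,l_i\to l_k,\;ca_r\to c\#\,\}$, which traps precisely when $a_r$ is present. No witness object and no second phase are needed. Your sketch would become correct once you add such a forcing device (an always-available catalytic alternative whose effect is $\#$), but as written the decrement branch is not sound.

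Two smaller points. First, the paper does not put the label of $\#\to\#$ into every set of $W$; it uses a dedicated singleton set, which suffices because halting requires that \emph{no} selectable set has an applicable rule. Your variant also works, but it is not needed. Second, the paper handles output by sending $a_r$ for $r\ge 3$ to the environment with target $out$ and taking $f=0$; ``disregarding'' the catalyst and auxiliaries in the skin is not compatible with the output convention used here, so you should route the terminal symbols out as the paper does.
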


\begin{proof}
We only prove the inclusion $PsRE\subseteq PsOP_{1}\left( cat_{1},ls\right) $%
. Let us consider a register machine $M=\left( n+2,B,l_{0},l_{h},I\right) $
with only the first and the second register ever being decremented, and let $%
A=\left\{ a_{1},\dots ,a_{n+2}\right\} $ be the set of objects for
representing the contents of the registers $1$ to $n+2$ of $M$. We construct
the following P system:%
\begin{eqnarray*}
\Pi  &=&(O,\left\{ c\right\} ,[\ {{\ ]_{{}}}_{{}}}_{1},cdl_{0},R_{1},H,W,0),
\\
O &=&A\cup B\cup \left\{ c,d,\#\right\} , \\
H &=&\left\{ l,l^{\prime }\mid l\in B\setminus \left\{  l_{h} \right\} \right\} 
\cup \left\{ l_{\langle x\rangle }\mid x\in \left\{ 1,2,1^{\prime },2^{\prime },d,\#\right\} \right\} ,
\end{eqnarray*}%
and the rules for $R_{1}$ and the sets of labels in $W$ are defined as
follows:

\medskip

\textbf{A.} Let $l_{i}:\left( \mathtt{ADD}\left( r\right)
,l_{j},l_{k}\right) $ be an ADD instruction in $I$. If $r>2$, then the
(labeled) rules 
\begin{equation*}
l_{i}:l_{i}\rightarrow l_{j}\left( a_{r},out\right) ,\ \ l_{i}^{\prime }:
l_{i}\rightarrow l_{k}\left( a_{r},out\right) ,
\end{equation*}%
are used, and for $r\in \left\{ 1,2\right\} $, we take the rules 
\begin{equation*}
l_{i}:l_{i}\rightarrow l_{j}a_{r},\ \ l_{i}^{\prime }:l_{i}\rightarrow
l_{k}a_{r}.
\end{equation*}%
In both cases, we define $\left\{ l_{i},l_{i}^{\prime }\right\} $ to be the
corresponding set of labels in $W$. The contents of each register $r$, $r\in
\left\{ 1,2\right\} $, is represented by the number of objects $a_{r}$
present in the skin membrane; any object $a_{r}$ with $r\geq 3$ is 
immediately sent out into the environment. \smallskip 

\textbf{B.} The simulation of a SUB instruction $l_{i}:\left( \mathtt{SUB}%
\left( r\right) ,l_{j},l_{k}\right) $, for $r\in \left\{ 1,2\right\} $, is
carried out by the following rules and the corresponding sets of labels in $W
$: 

For the case that the register $r$, $r\in \left\{ 1,2\right\} $, is not
empty we take the (labeled) rules 
\begin{equation*}
l_{i}:l_{i}\rightarrow l_{j},\ \ l_{\langle r\rangle }:ca_{r}\rightarrow c,\ \
l_{\langle d\rangle }:cd\rightarrow c\#,
\end{equation*}%
(if no symbol $a_{r}$ is present, i.e., if the register $r$ is empty, then
the trap symbol $\#$ is introduced by the rule $l_{\langle d\rangle }:
cd\rightarrow c\#$).

For the case that the register $r$ is empty, we take the (labeled) rules 
\begin{equation*}
l_{i}^{\prime }:l_{i}\rightarrow l_{k},\ \ l_{\langle r^{\prime }\rangle }:
ca_{r}\rightarrow c\#
\end{equation*}%
(if at least one symbol $a_{r}$ is present, i.e., if the register $r$ is not
empty, then the trap symbol $\#$ is introduced by the rule $l_{\langle 
r^{\prime }\rangle }:ca_{r}\rightarrow c\#$). 

The corresponding sets of labels to be taken into $W$ are 
$\left\{ l_{i},l_{\langle r\rangle },l_{\langle d\rangle }\right\} $ and 
$\left\{  l_{i}^{\prime },l_{\langle r^{\prime }\rangle }\right\} $, respectively. 
In both cases, the simulation of the SUB instruction works correctly if 
we have made the right choice.

\medskip

\textbf{C.} We also add the labeled rule 
$l_{\langle \# \rangle }:\#\rightarrow \#$ to $R_{1}$ and the set 
$\left\{ l_{\langle \# \rangle }\right\} $ to $W$, hence, the computation 
cannot halt once the trap symbol $\#$ has been generated.

In sum, we observe that each computation step in $M$ is simulated by 
exactly one computation step in $\Pi $; moreover, such a simulating 
computation in $\Pi $ halts if and only if the corresponding computation 
in $M$ halts (as soon as the label  $l_{h}$ appears, no rule can be 
applied anymore in $\Pi $, as we have not defined any rule for the HALT 
instruction of $M$). If at some moment we make the wrong choice when
trying to simulate a SUB instruction and have to generate the trap 
symbol $\# $, the computation will never halt. Hence, we have shown 
$Ps\left( M\right) =Ps\left( \Pi \right) $, which completes the proof.
$\hfill {}$
\end{proof}

\section{Computational Completeness of P Systems with Target Selection}

\begin{theorem}
\label{TheoremTS} $YOP_{7}\left( cat_{1},ts\right) =YRE$, $Y\in \left\{
N,Ps\right\} $.
\end{theorem}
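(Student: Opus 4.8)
As in the proof of Theorem~\ref{TheoremLS}, it suffices to establish the nontrivial inclusion $PsRE \subseteq PsOP_7\left(cat_1,ts\right)$; the converse follows from the Turing computability of the behaviour of such P systems. I would start from a register machine $M=\left(n+2,B,l_0,l_h,I\right)$ in which only registers $1$ and $2$ are ever decremented, keeping the same encoding as before: the content of register $r$ is given by the number of copies of $a_r$ present in the skin region, the program counter is a single label object, and the unique catalyst $c$ resides in the skin together with all register objects. Since a catalytic rule must have the form $ca\rightarrow \left(c,here\right)\cdots$, the catalyst never leaves the skin, so every decrement and every zero-test is performed there.

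The mechanism of target selection forces all rules applied in one region during one step to share a common target; thus the rules of each region split into the groups indexed by $here$, $out$ and the inner membranes reachable by $in_j$, and a step amounts to choosing one such group and firing it in the maximally parallel way. The role played by the label sets $W$ in Theorem~\ref{TheoremLS} is now taken over by these target groups, and the seven membranes are used to provide enough distinct groups and, above all, to route the current label object through a bounded sequence of phases. An $\mathtt{ADD}$ instruction $l_i:\left(\mathtt{ADD}\left(r\right),l_j,l_k\right)$ is simulated essentially in a single step: the two rules $l_i\rightarrow l_j a_r$ and $l_i\rightarrow l_k a_r$ (for $r\in\left\{1,2\right\}$) are placed in one target group, and maximal parallelism, which must consume the unique copy of $l_i$ by exactly one of them, supplies the nondeterministic choice for free; for an output register $r>2$ the produced object is exported in an auxiliary phase so that the target-uniformity constraint is respected.

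The heart of the construction, and the step I expect to be the main obstacle, is the faithful simulation of $l_i:\left(\mathtt{SUB}\left(r\right),l_j,l_k\right)$ with only one catalyst. The difficulty is structural: because every catalytic rule is pinned to the target $here$, all rules that can occupy the catalyst ($ca_r\rightarrow c$, $ca_r\rightarrow c\#$ and the dummy trap $cd\rightarrow c\#$) necessarily lie in the same target group, so one cannot selectively include or exclude a catalytic trap rule the way label selection did through the choice of the set in $W$. I would therefore replace the single-step test by a short fixed sequence of phases threaded through the membrane hierarchy: one phase guesses the branch and moves the label into a dedicated membrane; a catalytic phase in the skin either decrements $r$, consuming one $a_r$, or, when the guess is wrong, is forced onto the dummy and produces the trap symbol $\#$; and a verifying phase uses the \emph{non-cooperative} rule $a_r\rightarrow\#$, isolated under its own target so that it is active only during that phase, to punish a claim of emptiness made while some $a_r$ still survives. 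The targets $here$, $out$ and $in_j$ together with the additional membranes make these phases mutually exclusive in time and guarantee that in each phase exactly the intended rules are enabled.

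As before, the rule $\#\rightarrow\#$ is added so that once $\#$ appears no computation can halt; hence every wrong guess yields a non-halting computation and only faithful simulations contribute results. Checking that the phase routing never permits an unintended interaction, in particular that the always-present dummy is armed exactly in the non-empty branch and disarmed in the empty one and that no spurious maximal multiset can bypass a trap, is the bulk of the verification and is what fixes the number of membranes at seven; note that a single membrane would amount to settling the open problem of one-catalyst catalytic P systems, so more than one membrane is genuinely needed here. Collecting the phases for all $\mathtt{ADD}$ and $\mathtt{SUB}$ instructions into the seven regions yields a P system $\Pi$ with $Ps\left(M\right)=Ps\left(\Pi\right)$, which proves $PsRE\subseteq PsOP_7\left(cat_1,ts\right)$ and hence the theorem.
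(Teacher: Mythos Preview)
Your outline identifies the central obstacle correctly---every catalytic rule is pinned to the target $here$, so target selection cannot separate them---but the architecture you sketch does not close the gap. You keep the catalyst and all register symbols $a_1,a_2$ in the skin and perform the decrement there via $ca_r\rightarrow c$; both of these rules, the dummy trap $cd\rightarrow c\#$, and (by your own description) the $\mathtt{ADD}$ rules $l_i\rightarrow l_ja_r$ all live in the single $here$ group of the skin. Hence whenever the skin selects $here$, the catalyst may fire $ca_{3-r}\rightarrow c$ instead of $ca_r\rightarrow c$ as soon as $a_{3-r}$ is present, decrementing the wrong register while the label still advances to $l_j$; your phase routing steers only the label, not the register objects, so nothing detects this and $\Pi$ generates vectors $M$ does not. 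The same interference lets the catalyst erase a register symbol during an $\mathtt{ADD}$ step. Sending the label through inner membranes does not help here, because the offending rules are applicable in the skin regardless of where the label currently sits.

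The paper's construction is architecturally different and is built precisely to avoid this. The catalyst resides in an inner membrane (labelled~$-$), not in the skin, and in each cycle the skin ships \emph{all} non-output objects---the label together with every copy of $a_1,a_2$---into one non-deterministically chosen inner membrane via the weak target $in$ (exploiting the convention that all objects sent $in$ in one step enter the same child). Register-specific membranes $-_r$ first rewrite every $a_r$ to $a_r'$, so that the register to be decremented is recorded in the object itself rather than only in the label; the bundle then enters the catalyst's membrane, where $ca_r'\rightarrow c$ and $cd\rightarrow c\#$ coexist safely because only the intended register's symbols are primed. Separate membranes $0_r$ handle the zero test, and membrane $+$ handles $\mathtt{ADD}$; a wrong non-deterministic choice of inner membrane always produces $\#$. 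The priming step in a register-specific membrane is exactly the ingredient your sketch is missing: without it, target selection cannot isolate ``decrement register $1$'' from ``decrement register $2$'' inside a single $here$ group.
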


\begin{proof}
We only prove the inclusion $PsRE\subseteq PsOP_{7}\left( cat_{1},ts\right) $%
. Let us consider a register machine $M=\left( n+2,B,l_{0},l_{h},I\right) $
with only the first and the second register ever being decremented, and let $%
A=\left\{ a_{1},\dots ,a_{n+2}\right\} $ be the set of objects for
representing the contents of the registers $1$ to $n+2$ of $M$. The set of
labels $B\setminus \left\{ l_{h}\right\} $ is divided into three disjoint
subsets:%
\begin{eqnarray*}
B_{+} &=&\left\{ l\mid l_{i}:\left( \mathtt{ADD}\left( r\right)
,l_{j},l_{k}\right) \in I\right\} , \\
B_{-r} &=&\left\{ l\mid l_{i}:\left( \mathtt{SUB}\left( r\right)
,l_{j},l_{k}\right) \in I\right\} ,\ r\in \left\{ 1,2\right\} ;
\end{eqnarray*}%
moreover, we define 
\begin{eqnarray*}
B_{-}&=&B_{-1}\cup B_{-2},\\ 
B_{-}^{\prime }&=&\left\{ l^{\prime }\mid l\in B_{-}\right\} ,\\ 
B_{-}^{\prime \prime }&=&\left\{ l^{\prime \prime }\mid l\in B_{-}\right\} , 
\rm{\ and}\\ 
B^{\prime }&=&B_{+}\cup B_{-}\cup B_{-}^{\prime }\cup B_{-}^{\prime \prime }.
\end{eqnarray*}

We construct the following P system: 
\begin{eqnarray*}
\Pi  &=&(O,\left\{ c\right\} ,[\ [\ \ {{\ ]_{{}}}_{{}}}_{2}\dots \lbrack \ \ 
{{\ ]_{{}}}_{{}}}_{7}{{\ ]_{{}}}_{{}}}_{1},w_{1},\dots ,w_{7},R_{1},\dots
,R_{7},0), \\
O &=&A\cup B^{\prime }\cup \left\{ a_{1}^{\prime },a_{2}^{\prime
},c,d,\#\right\} ,
\end{eqnarray*}%
with $w_{1}=l_{0}$, $w_{2}=c$, and $w_{i}=\lambda $ for $3\leq i\leq 7$. In
order to make argumentation easier, in the following we refer to the
membrane labels $1$ to $7$ according to the following table:%
\begin{equation*}
\begin{tabular}[b]{|l|l|l|l|l|l|l|}
\hline
$1$ & $2$ & $3$ & $4$ & $5$ & $6$ & $7$ \\ \hline
skin & $-$ & $0_{1}$ & $0_{2}$ & $-_{1}$ & $-_{2}$ & $+$ \\ \hline
\end{tabular}%
\end{equation*}

The sets of rules now are constructed as follows: \medskip

\textbf{A.} The simulation of any instruction from $I$ starts in the skin
membrane with moving all objects except the output symbols $a_{r}$, 
$3\leq r\leq n+2$, into an inner membrane; according to the definition, 
taking the target $in$ means choosing one of the inner membranes in a 
non-deterministic way, but the same membrane for all objects to be 
moved in. The output symbols $a_{r}$, $3\leq r\leq n+2$, are sent out into 
the environment by $a_{r}\rightarrow \left( a_{r},out\right) $, thus yielding 
the result of a halting computation as the number of symbols $a_{r}$ 
sent out into the environment during this computation. In case some 
copies of the output symbols $a_{r}$, $3\leq r\leq n+2$, are present in the 
skin membrane, at any time we may either select the target $out$ to send 
all these objects out into the environment or else select the target $in$
in order to start the simulation of the next instruction. Choosing the target
$out$ or $in$ always is done in a non-deterministic way.
Hence, in sum we get%
\begin{equation*}
\begin{array}[b]{lll}
R_{1} & = & \left\{ x\rightarrow \left( x,in\right) \mid x\in B_{+}\cup
B_{-}\cup \left\{ a_{1},a_{2},a_{1}^{\prime },a_{2}^{\prime },\#\right\}
\right\} \cup \left\{ x\rightarrow \left( xd,in\right) \mid x\in
B_{-}^{\prime }\right\}  \\ 
& \cup  & \left\{ a_{r}\rightarrow \left( a_{r},out\right) \mid 3\leq r\leq
n+2\right\} .%
\end{array}%
\end{equation*}

\textbf{B.} For the simulation of an ADD instruction $l_{i}:\left( \mathtt{%
ADD}\left( r\right) ,l_{j},l_{k}\right) \in I$ all non-terminal symbols (all
symbols except $a_{r}$, $r\geq 3$) are expected to have been sent to
membrane $+$:%
\begin{equation*}
\begin{array}[b]{lll}
R_{+} & = & \left\{ l_{i}\rightarrow \left( l_{j}a_{r},out\right)
,l_{i}\rightarrow \left( l_{k}a_{r},out\right) \mid l_{i}:\left( \mathtt{ADD}%
\left( r\right) ,l_{j},l_{k}\right) \in I\right\}  \\ 
& \cup  & \left\{ l\rightarrow \left( \#,out\right) \mid l\in B^{\prime
}\setminus B_{+}\right\}  \\ 
& \cup  & \left\{ x\rightarrow \left( x,out\right) \mid x\in \left\{
a_{1},a_{2},\#\right\} \right\} .%
\end{array}%
\end{equation*}%
If the symbols arrive in membrane $+$ with a label $l\in B^{\prime
}\setminus B_{+}$, then the trap symbol $\#$ is generated and the
computation will never halt.\medskip 

\textbf{C.} The simulation of a SUB instruction $l_{i}:\left( \mathtt{SUB}%
\left( r\right) ,l_{j},l_{k}\right) $ is carried out in two steps for the
zero test, i.e., when the register $r$ is empty, using (the rules in)
membrane $0_{r}$ and in five steps for decrementing the number of symbols $%
a_{r}$, first using membrane $-_{r}$ to mark the corresponding symbols $a_{r}
$ into $a_{r}^{\prime }$ and then using the catalyst $c$ in membrane $-$ to
erase one of these primed objects; the marking procedure is necessary to
guarantee that the catalyst erases the right object. For $r\in \left\{
1,2\right\} $, we define the following sets of rules:%
\begin{equation*}
\begin{array}[b]{lll}
R_{0_{r}} & = & \left\{ l_{i}\rightarrow \left( l_{k},out\right)
,a_{r}\rightarrow \left( \#,out\right) \mid l_{i}:\left( \mathtt{SUB}\left(
r\right) ,l_{j},l_{k}\right) \in I\right\}  \\ 
& \cup  & \left\{ l\rightarrow \left( \#,out\right) \mid l\in B^{\prime
}\setminus B_{-r}\right\}  \\ 
& \cup  & \left\{ x\rightarrow \left( x,out\right) \mid x\in \left\{
a_{3-r},\#\right\} \right\} .%
\end{array}%
\end{equation*}

If the number of objects $a_{r}$ is not zero, i.e., if the register $r$ is
not empty, the introduction of the trap symbol $\#$ causes the computation
to never halt. On the other hand, if we want to decrement the register, we
have to guarantee that exactly one symbol $a_{r}$ is erased:%
\begin{equation*}
\begin{array}[b]{lll}
R_{-_{r}} & = & \left\{ l_{i}\rightarrow \left( l_{i}^{\prime },out\right)
\mid l_{i}\in B_{-r}\right\} \cup \left\{ a_{r}\rightarrow \left(
a_{r}^{\prime },out\right) \right\}  \\ 
& \cup  & \left\{ l\rightarrow \left( \#,out\right) \mid l\in B^{\prime
}\setminus B_{-r}\right\}  \\ 
& \cup  & \left\{ x\rightarrow \left( x,out\right) \mid x\in \left\{
a_{3-r},\#\right\} \right\} .%
\end{array}%
\end{equation*}%
The whole multiset of objects via the skin membrane now has to enter
membrane $-$; here the dummy symbol $d$ guarantees that the catalyst cannot
do nothing if no primed symbol $a_{r}^{\prime }$ has arrived; again the
generation of $\#$ causes the computation to not halt anymore:%
\begin{equation*}
\begin{array}[b]{lll}
R_{-} & = & \left\{ l_{i}^{\prime }\rightarrow l_{j}^{\prime \prime
},ca_{r}^{\prime }\rightarrow c,l_{i}^{\prime \prime }\rightarrow
\#,l_{i}^{\prime \prime }\rightarrow \left( l_{j},out\right) \mid
l_{i}:\left( \mathtt{SUB}\left( r\right) ,l_{j},l_{k}\right) \in I\right\} 
\\ 
& \cup  & \left\{ cd\rightarrow c\#,d\rightarrow \left( \lambda ,out\right)
\right\} \cup \left\{ a_{r}^{\prime }\rightarrow \left( a_{r},out\right)
\mid r\in \left\{ 1,2\right\} \right\} , \\ 
& \cup  & \left\{ l\rightarrow \left( \#,out\right) \mid l\in B^{\prime
}\setminus B_{-}^{\prime \prime }\right\}  \\ 
& \cup  & \left\{ x\rightarrow \left( x,out\right) \mid x\in \left\{
a_{3-r},\#\right\} \right\} .%
\end{array}%
\end{equation*}%
The end of the simulation of the SUB instruction $l_{i}:\left( \mathtt{SUB}%
\left( r\right) ,l_{j},l_{k}\right) $ in membrane $-$ takes two steps:
first,we apply $l_{i}^{\prime }\rightarrow l_{j}^{\prime \prime }$ and $%
ca_{r}^{\prime }\rightarrow c$, thus erasing exactly one symbol $%
a_{r}^{\prime }$, which corresponds to decrement register $r$; in the second
step, we send out the label $l_{j}$ by using $l_{i}^{\prime \prime
}\rightarrow \left( l_{j},out\right) $ together with the remaining symbols $%
a_{r}$ by using $a_{r}^{\prime }\rightarrow \left( a_{r},out\right) $ and
all symbols $a_{3-r}$ by using $a_{3-r}\rightarrow \left( a_{3-r},out\right) 
$. The additional symbol $d$ generated in the first step in the skin
membrane is eliminated by applying the rule $d\rightarrow \left( \lambda
,out\right) $. These two steps cannot be interchanged, as with using the
target $out$ first we would have to use the rule $l_{i}^{\prime }\rightarrow
\left( \#,out\right) $, thus introducing the trap symbol $\#$.

If in any of the membranes $R_{0_{r}}$, $R_{-_{r}}$, $r\in \left\{ 1,2\right\}$, 
and $R_{-} $ the symbols arrive with the wrong label $l\in B^{\prime }$, 
then the trap symbol $\#$ is generated and the computation will never halt.

We finally observe that a computation in $\Pi $ halts if and only if the
final label $l_{h}$ appears (and then stays in the skin membrane) and no
trap symbol $\#$ is present, hence, we conclude $Ps\left( M\right) =Ps\left(
\Pi \right) $.
$\hfill {}$
\end{proof}
\medskip

To eventually reduce the number of inner membranes remains as a challenging
task for future research.

\section{Computational Completeness of Time-Varying P Systems}

\begin{theorem}
\label{TheoremTV} $YOP_{1}\left( cat_{1},\alpha TV_{6}\right) =YRE$, $\alpha
\in \left\{ \lambda ,w\right\} $, $Y\in \left\{ N,Ps\right\} $.
\end{theorem}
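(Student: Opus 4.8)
The plan is to establish the nontrivial inclusion $PsRE\subseteq PsOP_{1}\left( cat_{1},TV_{6}\right) $ by simulating an arbitrary register machine $M=\left( n+2,B,l_{0},l_{h},I\right) $ in which only registers $1$ and $2$ are ever decremented, exactly as in the proofs of Theorems~\ref{TheoremLS} and~\ref{TheoremTS}; the reverse inclusion $PsOP_{1}\left( cat_{1},TV_{6}\right) \subseteq PsRE$ is immediate by the Church--Turing thesis, and the four cases $\alpha \in \left\{ \lambda ,w\right\} $, $Y\in \left\{ N,Ps\right\} $ then follow, since every strict $TV_{6}$ computation is also a weak one and both families sit inside $RE$. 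I would build a single membrane holding the one catalyst $c$, represent the content of register $r$ by the number of copies of $a_{r}$, send every output symbol $a_{r}$ with $3\leq r\leq n+2$ out to the environment, and drive the computation by a control language $\left( U_{1}U_{2}U_{3}U_{4}U_{5}U_{6}\right) ^{\ast }$, so that in step $6t+i$ only the rules of $U_{i}$ are eligible.

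The key design idea is to simulate each instruction of $M$ in exactly one full period of six steps. To keep the active label from ever halting prematurely, I would walk it through six phase-indexed versions $l\to l^{(1)}\to \cdots \to l^{(5)}\to \left( \text{next label}\right) $, one transformation per $U_{i}$, so that some rule is always applicable until $l_{h}$ is produced; since $l_{h}$ carries no rule, the system halts precisely when $M$ halts. An $\mathtt{ADD}\left( r\right) $ instruction is then easy: one phase rule emits a copy of $a_{r}$ and the final phase nondeterministically rewrites the label to $l_{j}$ or $l_{k}$. The substance of the proof is the zero-test of a $\mathtt{SUB}\left( r\right) $ instruction with a single catalyst, which I would handle by reusing the dummy-plus-trap gadget of Theorem~\ref{TheoremLS}: in the decrement branch the catalyst is given a mandatory task $cd\rightarrow c\#$ alongside $ca_{r}\rightarrow c$, so that if register $r$ is in fact empty the only applicable catalytic rule produces the trap symbol $\#$, whereas in the zero branch a rule $ca_{r}\rightarrow c\#$ fires exactly when a copy of $a_{r}$ wrongly survives; because $\#\rightarrow \#$ is placed in every $U_{i}$, any wrong guess yields a non-halting computation. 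The remaining phases of the period are spent marking the symbol to be erased (so the catalyst removes exactly one $a_{r}$), flushing the surviving register symbols, and re-synchronizing the successor label to phase~$1$.

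The hard part, and the reason the period is six rather than smaller, is the synchronization: both branches of every $\mathtt{SUB}$ instruction must consume the same number of steps as the period, yet the decrement branch intrinsically needs more steps than the zero-test branch (compare the five-step decrement and two-step zero-test of Theorem~\ref{TheoremTS}), so I would pad the shorter branch and fix the phase assignment so that the two realign at the end of every period. I expect the delicate point to be verifying that, under maximal parallelism with a single catalyst, no unintended catalytic rule fires in an intermediate phase and that the primed labels of consecutive periods cannot interleave; the dummy/trap construction is precisely what rules these cases out, since every deviation forces $\#$ and hence non-halting. Once the phase bookkeeping is pinned down, a routine case analysis shows that each computation step of $M$ corresponds to one period of $\Pi $ and that the halting computations of $\Pi $ are exactly the images of the halting computations of $M$, giving $Ps\left( M\right) =Ps\left( \Pi \right) $ and thereby the theorem.
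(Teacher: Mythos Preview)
Your overall framework is right---simulate a two-decrementable-register machine in one membrane, one instruction per period, trap symbol on wrong guesses---but you have misidentified the central obstruction, and with it the reason the period is six. The difficulty is \emph{not} that the decrement branch and the zero-test branch of a single $\mathtt{SUB}$ need different numbers of steps; in the paper's construction both branches of $\mathtt{SUB}(r)$ are handled in the \emph{same} pair of phases, distinguished only by a nondeterministic guess $l\mapsto l^{-}$ versus $l\mapsto l^{0}$ and verified by the trap. The real problem is that you have \emph{two} decrementable registers but only \emph{one} catalyst, and the sets $U_{i}$ are fixed independently of the current label. If any phase $U_{i}$ simultaneously contained $ca_{1}\rightarrow c$ (or $ca_{1}\rightarrow ca_{1}^{\prime}$) and $ca_{2}\rightarrow c$ (or $ca_{2}\rightarrow ca_{2}^{\prime}$), then while simulating a $\mathtt{SUB}(1)$ instruction the catalyst could legally consume an $a_{2}$, and maximal parallelism gives you no way to forbid that. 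Your ``marking'' step does not escape this: marking exactly one $a_{r}$ already requires a catalytic rule $ca_{r}\rightarrow ca_{r}^{\prime}$, and the same interference arises there.

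The paper resolves this by dedicating disjoint phases to the two registers: steps~2--3 handle $\mathtt{SUB}(1)$ and steps~5--6 handle $\mathtt{SUB}(2)$, with step~1 doing $\mathtt{ADD}$ and dispatching, and step~4 initiating the $\mathtt{SUB}(2)$ simulation. Whenever the current label is \emph{not} a $\mathtt{SUB}(r)$ label in the phase reserved for register~$r$, the catalyst is kept busy by dummy rules of the form $c\tilde{l}\rightarrow c\tilde{l}$ (with $\tilde{l}\rightarrow\#$ as the alternative), so it cannot touch any $a_{r}$. The auxiliary object that forces $\#$ in the empty-register-but-decrement-guessed case is not a persistent $d$ as in Theorem~\ref{TheoremLS} but a transient $h$ introduced by $l^{-}\rightarrow\bar{l}^{-}h$ and later erased; a persistent $d$ would collide with the blocking rules in the other phases. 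So the period six is really ``three phases per decrementable register,'' and the blocking-via-label mechanism is the missing ingredient in your sketch.
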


\begin{proof}
We only prove the inclusion $PsRE\subseteq PsOP_{1}\left(
cat_{1},TV_{6}\right) $. Let us consider a register machine $M=\left(
n+2,B,l_{0},l_{h},I\right) $ with only the first and the second register
ever being decremented. Again, we define $A=\left\{ a_{1},\dots
,a_{n+2}\right\} $ and divide the set of labels $B\setminus \left\{
l_{h}\right\} $ into three disjoint subsets:%
\begin{eqnarray*}
B_{+} &=&\left\{ l_{i}\mid l_{i}:\left( \mathtt{ADD}\left( r\right)
,l_{j},l_{k}\right) \in I\right\} , \\
B_{-r} &=&\left\{ l_{i}\mid l_{i}:\left( \mathtt{SUB}\left( r\right)
,l_{j},l_{k}\right) \in I\right\} ,\text{ }r\in \left\{ 1,2\right\} ;
\end{eqnarray*}%
moreover, we define $B_{-}=B_{-1}\cup B_{-2}$ as well as 
\begin{equation*}
B^{\prime }=\left\{ l,\tilde{l},\hat{l}\mid l\in B\setminus \left\{
l_{h}\right\} \right\} \cup \left\{ l^{-},l^{0},\bar{l}^{-},\bar{l}^{0},\mid
l\in B_{-}\right\} .
\end{equation*}%
The main challenge in the construction for the time-varying P system $\Pi $
is that the catalyst has to fulfill its task to erase an object $a_{r}$, $%
r\in \left\{ 1,2\right\} $, for both objects in the same membrane where all
other computations are carried out, too; hence, at a specific moment in the
cycle of period six, parts of simulations of different instructions have to
be coordinated in parallel. The basic components of the time-varying P
system $\Pi $ are defined as follows (we here do not distinguish between a
rule and its label): 
\begin{eqnarray*}
\Pi  &=&(O,\left\{ c\right\} ,[\ {{\ ]_{{}}}_{{}}}_{1},cl_{0},R_{1}\cup
\dots \cup R_{6},R_{1}\cup \dots \cup R_{6},\left( R_{1}\dots R_{6}\right)
^{\ast },0), \\
O &=&A\cup \left\{ a_{1}^{\prime },a_{2}^{\prime }\right\} \cup B^{\prime
}\cup \left\{ c,h,l_{h},\#\right\} .
\end{eqnarray*}

We now list the rules in the sets of rules $R_{i}$ to be applied in
computation steps $6n+i$, $n\geq 0$, $1\leq i\leq 6$:

\medskip

$\mathbf{R}_{1}$: in this first step of the cycle, especially all the ADD 
instructions are simulated, i.e., for each 
$l_{i}:\left( \mathtt{ADD}\left( r\right) ,l_{j},l_{k}\right) \in I$ we take

$cl_{i}\rightarrow ca_{r}\tilde{l}_{j}$, $cl_{i}\rightarrow ca_{r}\tilde{l}%
_{k}$ for $r\in \left\{ 1,2\right\} $ as well as $cl_{i}\rightarrow c(a_{r},out)\tilde{l}_{j}$, $cl_{i}\rightarrow c(a_{r},out)\tilde{l}%
_{k}$ for $3\leq r\leq n+2 $ (in order to obtain the output in the environment, for $r\geq 3$ we have to take $(a_{r},out)$ instead of  $a_{r}$);
only in the sixth step of the cycle, from $\tilde{l}_{j}$ and $\tilde{%
l}_{k}$ the corresponding unmarked labels $l_{j}$ and $l_{k}$ will be
generated;

$cl\rightarrow cl^{-}$, $cl\rightarrow cl^{0}$ initiate the simulation of a
SUB instruction for register $1$ labeled by $l\in B_{-1}$, i.e., we make a
non-deterministic guess whether register $r$ is empty (with introducing $%
l^{0}$) or not (with introducing $l^{-}$);

$cl\rightarrow c\hat{l}$ marks a label $l\in B_{-2}$ (the simulation of such
a SUB instruction for register $2$ will start in step $4$ of the cycle);

$\#\rightarrow \#$ keeps the trap symbol $\#$ alive guaranteeing an infinite
loop once $\#$ has been generated;

$h\rightarrow \lambda $ eliminates the auxiliary object $h$ which 
eventually has been generated two steps before ($h$ is needed for
simulating the decrement case of SUB instructions).

\medskip

$\mathbf{R}_{2}$: in the second and the third step, the SUB\ instructions on
register $1$ are simulated, i.e., for all $l\in B_{-1}$ we start with

$ca_{1}\rightarrow ca_{1}^{\prime }$ (if present, exactly one copy of $a_{1}$
can be primed, but only if a label $l^{-}$ for some $l$ from $B_{-1}$ is
present) and

$l^{-}\rightarrow \bar{l}^{-}h$, $l^{0}\rightarrow \bar{l}^{0}$ for all $%
l\in B_{-1}$;

all other labels $\tilde{l}$ for $l\in B$ block the catalyst $c$ from
erasing a copy of $a_{1}$ by forcing the application of the corresponding
rules $c\tilde{l}\rightarrow c\tilde{l}$ for $c$ in order to avoid the
introduction of the trap symbol $\#$ by the enforced application of a rule $%
\tilde{l}\rightarrow \#$, i.e., we take

$c\tilde{l}\rightarrow c\tilde{l}$, $\tilde{l}\rightarrow \#$ for all $l\in B
$, and

$c\hat{l}\rightarrow c\hat{l}$, $\hat{l}\rightarrow \#$ for all $l\in B_{-2}$%
;

$\#\rightarrow \#$ keeps the computation alive once the trap symbol has been
introduced.

\medskip

$\mathbf{R}_{3}$: for all $l_{i}:\left( \mathtt{SUB}\left( 1\right)
,l_{j},l_{k}\right) \in I$ we take

$c\bar{l}_{i}^{0}\rightarrow c\tilde{l}_{k}$, $a_{1}^{\prime }\rightarrow \#$%
, $\bar{l}_{i}^{0}\rightarrow \#$ (zero test; if a primed copy of $a_{1}$ is
present, then the trap symbol $\#$ is generated);

$\bar{l}_{i}^{-}\rightarrow \tilde{l}_{j}$, $ca_{1}^{\prime }\rightarrow c$, 
$ch\rightarrow c\#$ (decrement; the auxiliary symbol $h$ is needed to keep
the catalyst $c$ busy with generating the trap symbol $\#$ if we have taken
the wrong guess when assuming the register $1$ to be non-empty);

$c\tilde{l}\rightarrow c\tilde{l}$, $\tilde{l}\rightarrow \#$ for all $l\in B
$ (with these labels, we just pass through this step);

$c\hat{l}\rightarrow c\hat{l}$, $\hat{l}\rightarrow \#$ for all $l\in B_{-2}$
(these labels pass through this step to become active in the next step);

$\#\rightarrow \#$.

\medskip

$\mathbf{R}_{4}$: in the fourth step, the simulation of SUB instructions on
register $2$ is initiated by using

$c\hat{l}\rightarrow cl^{-}$, $c\hat{l}\rightarrow cl^{0}$ for all $l\in
B_{-2}$, i.e., we make a non-deterministic guess whether register $r$ is
empty (with introducing $l^{0}$) or not (with introducing $l^{-}$);

$c\tilde{l}\rightarrow c\tilde{l}$, $\tilde{l}\rightarrow \#$ for all $l\in B
$ (with all other labels, we only pass through this step);

$\#\rightarrow \#$,

$h\rightarrow \lambda $ (if $h$ has been introduced by $l^{-}\rightarrow 
\bar{l}^{-}h$ in the second step for some $l\in B_{-1}$, we now erase it).

\medskip

$\mathbf{R}_{5}$: in the fifth and the sixth step, the SUB\ instructions on
register $2$ are simulated, i.e., for all $l\in B_{-2}$ we start with

$ca_{2}\rightarrow ca_{2}^{\prime }$ (if present, exactly one copy of $a_{2}$
can be primed) and

$l^{-}\rightarrow \bar{l}^{-}h$, $l^{0}\rightarrow \bar{l}^{0}$ for all $%
l\in B_{-2}$;

$c_{1}\tilde{l}\rightarrow c_{1}\tilde{l}$, $\tilde{l}\rightarrow \#$ for
all $l\in B$;

$\#\rightarrow \#$.

\medskip

$\mathbf{R}_{6}$: the simulation of SUB instructions $l_{i}:\left( \mathtt{%
SUB}\left( 2\right) ,l_{j},l_{k}\right) \in I$ on register $2$ is finished by

$c\bar{l}_{i}^{0}\rightarrow cl_{k}$, $a_{2}^{\prime }\rightarrow \#$, $\bar{%
l}_{i}^{0}\rightarrow \#$ (zero test; if a primed copy of $a_{2}$ is
present, then the trap symbol $\#$ is generated);

$\bar{l}_{i}^{-}\rightarrow l_{j}$, $ca_{2}^{\prime }\rightarrow c$, $%
ch\rightarrow c\#$ (decrement; the auxiliary symbol $h$ is needed to keep
the catalyst $c$ busy with generating the trap symbol $\#$ if we have taken
the wrong guess when assuming the register $2$ to be non-empty; if it is not
used, it can be erased in the next step by using $h\rightarrow \lambda $ in $%
R_{1}$);

$c\tilde{l}\rightarrow cl$, $\tilde{l}\rightarrow \#$ for all $l\in B$;

$\#\rightarrow \#$ .

\medskip

Without loss of generality, we may assume that the final label $l_{h}$ in $M$
is only reached by using a zero test on register $2$; then, at the beginning
of a new cycle, after a correct simulation of a computation from $M$ in the
time-varying P system $\Pi $ no rule will be applicable in $R_{1}$ (another
possibility would be to take $c\bar{l}_{i}^{0}\rightarrow c$ instead of $c%
\bar{l}_{i}^{0}\rightarrow cl_{h}$ in $R_{6}$).

At the end of the cycle, in case all guesses have been correct, the
requested instruction of $M$ has been simulated and the label of the next
instruction to be simulated is present in the skin membrane. Only in the
case that $M$ has reached the final label $l_{h}$, the computation in $\Pi $
halts, too, but only if during the simulation of the computation of $M$ in $%
\Pi $ no trap symbol $\#$ has been generated; hence, we conclude $Ps\left(
M\right) =Ps\left( \Pi \right) $.
$\hfill {}$
\end{proof}

\section{Computational Completeness of P Systems with Membrane Creation}

\begin{theorem}
\label{mcre} $YOP_{1,2}\left( cat_{1},mcre\right) =YRE$, $Y\in \left\{
N,Ps\right\} $.
\end{theorem}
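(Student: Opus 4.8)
The plan is to establish the nontrivial inclusion $PsRE\subseteq PsOP_{1,2}(cat_1,mcre)$; the reverse inclusion is the usual Church--Turing remark that such systems are effectively Turing-simulable. Following the pattern of the earlier proofs, I would fix a register machine $M=(n+2,B,l_0,l_h,I)$ in which only registers $1$ and $2$ are ever decremented, encode the value of register $r$ by the number of copies of an object $a_r$, keep the single catalyst $c$ and the current label object in the skin, and send each output object $a_r$ with $r\ge 3$ to the environment. The $\mathtt{ADD}$ instructions are then simulated catalyst-free in one step by $l_i\to l_j a_r$ and $l_i\to l_k a_r$ (using target $out$ for $r\ge 3$), so all the work, as always with one catalyst, is in the faithful simulation of $\mathtt{SUB}$.

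The device that I expect to make a single catalyst suffice here, in place of the permeability operations $\delta,\tau$ and the four membranes of \cite{MadhuKamala}, is to keep the register objects $a_1,a_2$ out of the catalyst's reach inside a second, inner membrane, created once at the start by a catalytic rule $c\,s_0\to c[\,g\,]_2$. While this inner membrane is intact the catalyst sits idle in the skin, so the otherwise always-enabled catalytic rules on $a_r$ cannot misfire; an $\mathtt{ADD}$ on register $r\in\{1,2\}$ merely ships a new $a_r$ inward with target $in_2$. To simulate $l_i:(\mathtt{SUB}(r),l_j,l_k)$ I would run a short, bounded episode that exposes the register to the catalyst only momentarily: a signal sent into membrane $2$ triggers a dissolution rule $s\to\delta$ that empties the register contents into the skin, the catalytic membrane-creation rule $ca_r\to c[\,e\,]_2$ then consumes exactly one $a_r$ (the decrement) while re-creating the inner membrane with a witness $e$, and finally the leftover register objects are sent back in with target $in_2$ and $e$ is cleaned up by a rule in $R_2$. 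At no point are more than two membranes present: the configuration dips to one membrane during the exposed step and returns to two.

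The main obstacle is precisely the coordination of the catalyst's single, obligatory move under the maximally parallel semantics: in the exposed step the catalyst must act on exactly the intended register, a wrong guess about emptiness must be punished in every continuation, and no spurious decrement of the other register may slip through. I would handle the zero test by keeping a trap rule $ca_r\to c\#$ available in the exposed step, so that a non-empty register on a guessed-zero branch forces the catalyst to manufacture $\#$, while a non-cooperative rule rewrites the guessed label to $l_k$ when the register is genuinely empty; and I would handle the decrement as in the proof of Theorem~\ref{TheoremTV}, introducing an auxiliary object whose only role is to be consumed into $\#$ by the catalyst exactly when a guessed non-empty register turns out to be empty, and which is erased harmlessly otherwise. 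The delicate point to get right in a full write-up is the tagging of the exposed objects so that during a $\mathtt{SUB}(r)$ episode only the register-$r$ decrement rule, and not the register-$(3-r)$ rule, is enabled on the catalyst; this is the same parallel-coordination difficulty noted for the time-varying construction, and here it is meant to be resolved by the timing imposed by creation and dissolution together with trap flooding via $\#\to\#$.

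With these ingredients the remaining verification is routine: each instruction of $M$ is simulated in a bounded number of steps, the operations controlling membrane permeability are never used (only dissolution and catalytic creation), at most two membranes ever coexist, and a computation of $\Pi$ halts without $\#$ if and only if $M$ halts, so $Ps(M)=Ps(\Pi)$ and the inclusion follows.
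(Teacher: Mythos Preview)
Your architecture is genuinely different from the paper's, and the difference is exactly where your proposal has a gap. In the paper, the register objects $a_1,a_2$ live in the \emph{skin} throughout; for each $\mathtt{SUB}(r)$ the catalyst creates a fresh inner membrane \emph{labeled $r+1$} via $cl_i\to c[\,l_i\,]_{r+1}$, and in the next step the only applicable catalytic rule is $ca_r\to c(a_r,in_{r+1})$, because the companion rule $ca_{3-r}\to c(a_{3-r},in_{4-r})$ targets a membrane that does not exist. The inner membrane then dissolves (via $a_r\to\lambda\delta$) or not, which is how the zero test is read off. So the register-distinguishing problem is solved not by timing or traps but by the \emph{label} of the freshly created membrane together with the specific target $in_{r+1}$; the paper even notes afterwards that replacing $in_j$ by plain $in$ here is left open.

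Your scheme stores both registers inside a single persistent inner membrane and dissolves it for every $\mathtt{SUB}$, so in the exposed step both $a_1$ and $a_2$ sit in the skin next to the catalyst. The rule set $R_1$ is fixed and does not depend on which instruction is being simulated; hence if $R_1$ contains any catalytic rule that lets $c$ act on $a_{3-r}$ (and it must, to simulate $\mathtt{SUB}(3{-}r)$ at other times), maximal parallelism allows the catalyst to pick that rule instead of the intended one. This is not a ``wrong guess'' that can be punished by a trap: it produces a perfectly consistent successor configuration in which the wrong register has been decremented, so $Ps(\Pi)\supsetneq Ps(M)$. Your sentence ``meant to be resolved by the timing imposed by creation and dissolution together with trap flooding'' does not supply a mechanism; timing cannot disable a rule that is syntactically present in $R_1$, and $\#\to\#$ only helps once $\#$ has already been produced. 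Without a concrete device that makes $ca_{3-r}$ rules inapplicable during a $\mathtt{SUB}(r)$ episode (which is precisely what the paper's register-specific membrane label buys), the construction is incomplete.
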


\begin{proof}
We only prove the inclusion $PsRE\subseteq PsOP_{1,2}\left(
cat_{1},mcre\right) $. Let us consider a register machine $M=\left(
n+2,B,l_{0},l_{h},I\right) $ with only the first and the second register
ever being decremented. Again we define $A=\left\{ a_{1},\dots
,a_{n+2}\right\} $ as the set of objects for representing the contents of
the registers $1$ to $n+2$ of $M$. We construct the following P system: 
\begin{eqnarray*}
\Pi  &=&\left( O,\left\{ c\right\} ,[\ \ {{\ ]_{{}}}_{{}}}%
_{1},cdl_{0},R_{1},R_{2},R_{3},0\right) , \\
O &=&A\cup \left\{ l,l^{\prime },l^{\prime \prime }\mid l\in B\right\} \cup
\left\{ c,d,d^{\prime },d^{\prime \prime }\right\} ,
\end{eqnarray*}%
and the sets of rules are constructed as follows. \medskip 

\textbf{A.} For each ADD instruction $l_{i}:\left( \mathtt{ADD}\left(
r\right) ,l_{j},l_{k}\right) $ in $I$, the rules 
\begin{eqnarray*}
&\mbox{step 1: }&l_{i}\rightarrow l_{i}^{\prime },\ d\rightarrow d^{\prime },
\\
&\mbox{step 2: }&l_{i}^{\prime }\rightarrow a_{r}l_{j},\ l_{i}^{\prime
}\rightarrow a_{r}l_{k},\ d^{\prime }\rightarrow d,
\end{eqnarray*}%
are taken into $R_{1}$ and obviously simulate an ADD instruction in two
steps. We also add the rules $a_{r}\rightarrow \left( a_{r},out\right) $ for 
$3\leq r\leq n+2$ to $R_{1}$; thus, in any moment, every copy of $a_{r}$, 
$3\leq r\leq n+2$, present in the skin membrane is sent out to the environment. \smallskip 

\textbf{B.} For each SUB instruction $l_{i}:\left( \mathtt{SUB}\left(
r\right) ,l_{j},l_{k}\right) $ in $I$, the following rules in 
$R_{1}$ and $R_{r+1}$, $r\in \left\{ 1,2\right\} $, are used:%
\begin{equation*}
\begin{tabular}{c|c|c}
Step & $R_{1}$ & $R_{r+1}$ \\ \hline
1 & $cl_{i}\rightarrow c[\ l_{i}{{\ ]_{{}}}_{{}}}_{r+1},\ d\rightarrow
d^{\prime }$ & -- \\ 
2 & $\quad ca_{r}\rightarrow c\left( a_{r},in_{r+1}\right) ,\ d^{\prime
}\rightarrow \left( d^{\prime },in_{r+1}\right) \quad $ & $\quad
l_{i}\rightarrow l_{i}^{\prime }$ \\ 
3 & -- & $\quad a_{r}\rightarrow \lambda \delta ,\ l_{i}^{\prime
}\rightarrow l_{i}^{\prime \prime },\ d^{\prime }\rightarrow d^{\prime
\prime }$ \\ 
4 & $cl_{i}^{\prime \prime }\rightarrow cl_{j},\ d^{\prime \prime
}\rightarrow d$ & $l_{i}^{\prime \prime }\rightarrow l_{k},\ d^{\prime
\prime }\rightarrow d\delta $%
\end{tabular}%
\end{equation*}%
A SUB instruction $l_{i}:\left( \mathtt{SUB}\left( r\right)
,l_{j},l_{k}\right) $ (with $r\in \left\{ 1,2\right\} $) is simulated
according to the four steps suggested in the table given above:

In the first step, we create a membrane with the label $r+1$, where $l_{i}$
is sent to, and simultaneously $d$ becomes $d^{\prime }$. In the next step,
if any $a_{r}$ exists, i.e., if register $r$ is not empty, then one copy of $%
a_{r}$ should enter the membrane $r+1$ just having been created in the
preceding step. Note that the selection of the right membrane (the use of $%
in_{r+1} $ instead of $in$) is important: $a_{r}$ has to go to the membrane
created in the previous step, when $r+1$ has been specified by the label $%
l_{i}$. At the same time, $d^{\prime }$ enters the membrane $r+1$, and $%
l_{i} $ becomes $l_{i}^{\prime }$ in this membrane. If the register $r$ is
empty, then the catalyst is doing nothing in this second step.

In the third step, in membrane $r+1$, $l_{i}^{\prime }$ becomes $%
l_{i}^{\prime \prime }$ and $d^{\prime }$ becomes $d^{\prime \prime }$. If $%
a_{r}$ is not present in membrane $r+1$, nothing else happens there in this
step; if $a_{r}$ is present, it dissolves the membrane and disappears.
Observe that in both cases $ca_{r}\rightarrow c\left( a_{r},in_{r+1}\right) $
will not be applicable (anymore) in $R_{1}$. Thus, we either have $%
cl_{i}^{\prime \prime }d^{\prime \prime }$ in the skin membrane (when the
register has been non-empty), or we have only $c$ in the skin membrane and $%
l_{i}^{\prime \prime }d^{\prime \prime }$ in the inner membrane $r+1$. In
the first case, in the fourth step we use the rules $cl_{i}^{\prime \prime
}\rightarrow cl_{j}\ $and $d^{\prime \prime }\rightarrow d$ from $R_{1}$,
which is the correct continuation of the simulation of the SUB instruction;
in the latter case, we use $l_{i}^{\prime \prime }\rightarrow l_{k}$ and $%
d^{\prime \prime }\rightarrow d\delta $ in $R_{r+1}$. The inner membrane is
dissolved, and in the skin membrane we get the objects $cl_{k}d$. In both
cases, the simulation of the SUB instruction is correct and we return to a
configuration as that we started with, hence, the simulation of another
instruction can start. \smallskip

There is one interference between the rules of $\Pi $ simulating the ADD and
the SUB instructions of $M$. If in the second step of simulating a SUB
instruction, instead of $d^{\prime }\rightarrow \left( d^{\prime
},in_{r+1}\right) $ we use $d^{\prime }\rightarrow d$, then the case when
register $r$ is non-empty continues correctly, as the simulation lasts four
steps, and in the end $d$ is present in the skin membrane (the dissolution
of membrane $r+1$ is done by $a_{r}$). If the register $r$ has been empty, $%
l_{i}^{\prime \prime }$ will become $l_{k}$ in membrane $r+1$ and it will
remain there until $d^{\prime }$ enters the membrane, changes to $d^{\prime
\prime }$, and then dissolves it (as long as $d,d^{\prime }$ switch to each
other in the skin membrane, the computation cannot halt). Thus, also in this
case we have to return to the correct submultiset $cl_{k}d$ in the skin 
membrane.

Consequently, exactly the halting computations of $M$ are simulated by the
halting computations in $\Pi $; hence, $Ps\left( M\right) =Ps\left( \Pi
\right) $. The observation that the maximal number of membranes in any
computation of $\Pi $ is two completes the proof.
$\hfill {}$
\end{proof}

\medskip

It remains as an open problem whether it is possible to use the target $in$
only instead of the $in_{j}$.

\section{Computational Completeness of P Systems with Mobile Catalysts}

If the membrane creation rules are of the form $ca\rightarrow \lbrack \ cb{{%
\ ]_{{}}}_{{}}}_{i}$, then this implicitly means that the catalyst is moving
from one region to another one. However, for mobile catalysts, the
computational completeness of such systems with only one catalyst has
already been proved in \cite{KrishnaAndrei}, using three membranes and
targets of the forms $here$, $out$, and $in_{j}$. In this paper, we improve
this result from the last point of view, making only use of the targets $here
$, $out$, and $in$. In fact, if in the proof of Theorem~\ref{TheoremTS} we
let the catalyst $c$ move with all the other objects, then we immediately
obtain a proof for $NOP_{7}\left( mcat_{1}\right) =NRE$ where even only the
targets $out$ and $in$ are used (but instead of three we need seven
membranes).

\begin{theorem}
\label{mobile} $YOP_{3}\left( mcat_{1}\right) =YRE$, $Y\in \left\{
N,Ps\right\} $.
\end{theorem}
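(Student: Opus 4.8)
The plan is to simulate an arbitrary register machine $M=(n+2,B,l_0,l_h,I)$ by a P system with mobile catalysts using exactly three membranes, following the general strategy established for the target-selection construction of Theorem~\ref{TheoremTS} but now exploiting the catalyst's mobility to replace the role of distinguished membranes. The key idea is that a single mobile catalyst, by crossing between membranes, can itself carry positional information and coordinate the decrement operations, so that the seven membranes needed in the static case collapse to three. I would take $O=A\cup B^{\prime}\cup\{c,d,\#,\dots\}$ with $A=\{a_1,\dots,a_{n+2}\}$ encoding register contents as multiplicities of $a_r$, and place the single catalyst $c$ together with the initial label $l_0$ in the skin membrane.

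The ADD instructions are the easy part: for $l_i:(\mathtt{ADD}(r),l_j,l_k)$ I would use non-cooperative rules $l_i\rightarrow l_j a_r$ and $l_i\rightarrow l_k a_r$ (for $r\in\{1,2\}$), and $l_i\rightarrow l_j(a_r,out)$, $l_i\rightarrow l_k(a_r,out)$ for the output registers $r\geq 3$, so that output symbols are flushed to the environment. The heart of the construction is the SUB simulation. As in the target-selection proof, the delicate point is guaranteeing that the catalyst erases exactly one copy of $a_r$ and nothing else, and that the zero-test branch is distinguishable from the decrement branch. I would have the catalyst $c$ move into an inner membrane (say membrane $2$ for the marking/priming phase and membrane $3$, or a round-trip, for the erasing phase), carrying out a marking step $a_r\rightarrow a_r^{\prime}$ in one region and then a catalytic erasing step $c a_r^{\prime}\rightarrow c$ in another, with the label symbols correspondingly moving through primed intermediate states $l_i\rightarrow l_i^{\prime}\rightarrow l_i^{\prime\prime}$. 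The zero-test is handled by a trap rule: if a primed $a_r^{\prime}$ is present when the zero-branch label fires, a rule producing $\#$ is triggered and the computation never halts because of the persistent rule $\#\rightarrow\#$.

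The main obstacle, and where I would spend the most care, is synchronization and wrong-guess trapping with only one mobile catalyst and only three membranes. Because the catalyst is mobile, it can be "stolen" by competing rules at exactly the moment a different instruction expects it, so I must ensure that at each step only the intended catalytic rule is applicable and that every premature or mismatched arrival of a labeled symbol in the wrong membrane forces the introduction of $\#$. Concretely, for each membrane I would add trap rules $l\rightarrow(\#,out)$ for every label $l$ not legitimately processed in that membrane (exactly as the clauses $\{l\rightarrow(\#,out)\mid l\in B^{\prime}\setminus B_{-r}\}$ appear in Theorem~\ref{TheoremTS}), and a dummy symbol $d$ that keeps the catalyst occupied via $cd\rightarrow c\#$ whenever the decrement branch was chosen but register $r$ was actually empty. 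The remaining work is routine bookkeeping: verifying that each instruction of $M$ is simulated in a bounded number of steps, that the only halting configurations correspond to reaching $l_h$ with no $\#$ present, and hence that $Ps(M)=Ps(\Pi)$, which together with the trivial inclusion $PsOP_3(mcat_1)\subseteq PsRE$ (the system is effectively a Turing-computable device) yields $YOP_3(mcat_1)=YRE$.
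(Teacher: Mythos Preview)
Your proposal has a genuine gap: you say you will follow the target-selection strategy of Theorem~\ref{TheoremTS}, but the paper explicitly observes (in the paragraph preceding the theorem) that transporting that construction to the mobile-catalyst setting yields a \emph{seven}-membrane system, not three. The marking-then-erasing scheme you describe---prime all copies $a_r\to a_r'$, then let the catalyst delete one via $ca_r'\to c$, with a dummy $d$ and trap rules for wrong guesses---relies on shuttling the entire multiset of non-terminal objects between dedicated membranes ($0_r$, $-_r$, $-$, $+$), and you have not explained how the catalyst's mobility alone collapses those five auxiliary membranes into two. Saying the catalyst ``carries positional information'' is not a mechanism; in particular, with $a_r$ kept in the skin as your ADD rules suggest, priming is a non-cooperative rule that fires on \emph{every} copy in parallel, and the subsequent zero/non-zero branching still needs the membrane scaffolding you are trying to eliminate.

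The paper's three-membrane proof uses a different idea that your sketch does not contain. The register contents are \emph{not} kept in the skin: copies of $a_1$ live in membrane~$2$ and copies of $a_2$ in membrane~$3$ (enforced by trap rules $a_{3-r}\to\#$ in $R_{r+1}$). To simulate $l_i:(\mathtt{SUB}(r),l_j,l_k)$, the catalyst and the label move \emph{together} into an inner membrane via $cl_i\to(c,in)(l_i,in)$; landing in the wrong one is trapped. Inside membrane $r+1$ the rule $ca_r\to(c,out)$ fires if and only if the register is non-empty, so after one more step the catalyst is back in the skin exactly when the decrement succeeded, and the branching between $l_j$ and $l_k$ is read off from the catalyst's \emph{location} rather than from a priming/dummy mechanism. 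That positional trick is what buys the reduction to three membranes, and it is absent from your outline.
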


\begin{proof}
We only prove the inclusion $PsRE\subseteq PsOP_{3}\left( mcat_{1}\right) $.
Let us consider a register machine $M=\left( n+2,B,l_{0},l_{h},I\right) $
with only the first and the second register ever being decremented. Again we
define $A=\left\{ a_{1},\dots ,a_{n+2}\right\} $ as the set of objects for
representing the contents of the registers $1$ to $n+2$ of $M$. We construct
the following P system: 

\begin{eqnarray*}
\Pi  &=&(O,\left\{ c\right\} ,[\ [\ \ {{\ ]_{{}}}_{{}}}_{2}[\ \ {{\ ]_{{}}}%
_{{}}}_{3}{{\ ]_{{}}}_{{}}}_{1},cl_{0},R_{1},R_{2},R_{3},0), \\
O &=&A\cup \left\{ l,l^{\prime },l^{\prime \prime },l^{\prime \prime \prime
}\mid l\in B\right\} \cup \left\{ c,\#\right\} , \\
R_{1}&=& \left\{  l_{i}\rightarrow l_{j}\left( a_{r},in\right) ,\ 
               l_{i}\rightarrow l_{k}\left( a_{r},in\right)   
               \mid l_{i}:\left( \mathtt{ADD}\left( r\right) ,l_{j},l_{k}\right) \in I,
               r\in \left\{ 1,2\right\} \right\} \\
&\cup &  \left\{  l_{i}\rightarrow l_{j}\left( a_{r},out\right) ,\ 
               l_{i}\rightarrow l_{k}\left( a_{r},out\right)    
               \mid l_{i}:\left( \mathtt{ADD}\left( r\right) ,l_{j},l_{k}\right) \in I,
               3\leq r\leq n+2\right\} \\
&\cup & \left\{  cl_{i}\rightarrow \left( c,in\right) \left( l_{i},in\right)  
               \mid l_{i}:\left( \mathtt{SUB}\left( r\right) ,l_{j},l_{k}\right) \in I,
               r\in \left\{ 1,2\right\} \right\} \\
&\cup &  \left\{  cl_{i}^{\prime \prime \prime }\rightarrow cl_{j},\ 
               l_{i}^{\prime \prime \prime }\rightarrow \# ,\ 
               \#\rightarrow \#
               \mid l_{i}:\left( \mathtt{SUB}\left( r\right) ,l_{j},l_{k}\right) \in I,
               r\in \left\{ 1,2\right\} \right\} ,\\
R_{2}&=& \left\{  a_{2}\rightarrow \# ,\  \#\rightarrow \# ,\ 
               ca_{1}\rightarrow \left( c,out\right) \right\} \\
&\cup &  \left\{ l_{i}\rightarrow \# 
                \mid l_{i}:\left( \mathtt{SUB}\left( r\right) ,l_{j},l_{k}\right) \in I,
               r\in \left\{ 1,2\right\} \right\} \\
&\cup &  \left\{ cl_{i}\rightarrow cl_{i}^{\prime } ,\ 
               l_{i}^{\prime }\rightarrow l_{i}^{\prime \prime } ,\ 
               cl_{i}^{\prime \prime }\rightarrow \left( c,out\right) \left( l_{k},
               out\right)  ,\ 
               l_{i}^{\prime \prime }\rightarrow \left( l_{i}^{\prime \prime \prime },
               out\right) 
               \mid l_{i}:\left( \mathtt{SUB}\left( 1\right) ,l_{j},l_{k}\right) \in I \right\} ,\\
R_{3}&=&  \left\{  a_{1}\rightarrow \# ,\  \#\rightarrow \# ,\ 
               ca_{2}\rightarrow \left( c,out\right) \right\} \\
&\cup &  \left\{ l_{i}\rightarrow \# 
                \mid l_{i}:\left( \mathtt{SUB}\left( r\right) ,l_{j},l_{k}\right) \in I,
               r\in \left\{ 1,2\right\} \right\} \\
&\cup &  \left\{ cl_{i}\rightarrow cl_{i}^{\prime } ,\ 
               l_{i}^{\prime }\rightarrow l_{i}^{\prime \prime } ,\ 
               cl_{i}^{\prime \prime }\rightarrow \left( c,out\right) \left( l_{k},
               out\right)  ,\ 
               l_{i}^{\prime \prime }\rightarrow \left( l_{i}^{\prime \prime \prime },
               out\right) 
               \mid l_{i}:\left( \mathtt{SUB}\left( 2\right) ,l_{j},l_{k}\right) \in I \right\} .
\end{eqnarray*}%
\noindent The rules in the sets of rules $R_{1}$, $R_{2}$, and $R_{3}$ are used 
as follows: 
\medskip 

\textbf{A.} Let $l_{i}:\left( \mathtt{ADD}\left( r\right)
,l_{j},l_{k}\right) $ be an ADD instruction in $I$. If $r\geq 3$, then the
rules $l_{i}\rightarrow l_{j}\left( a_{r},out\right) ,\ \ l_{i}\rightarrow
l_{k}\left( a_{r},out\right) $ are used in $R_{1}$; if $r\in \left\{
1,2\right\} $, in $R_{1}$ we take the rules $l_{i}\rightarrow
l_{j}\left( a_{r},in\right) $ and $l_{i}\rightarrow l_{k}\left(
a_{r},in\right) $ as well as the rules $a_{j}\rightarrow \#\ $and $%
\#\rightarrow \#$ in $R_{4-j}$, $j\in \left\{ 1,2\right\} $. The contents of
each register $r$, $r\in \left\{ 1,2\right\} $, is represented by the number
of objects $a_{r}$ present in membrane $r+1$; any object $a_{r}$, $r\geq 3$,
is immediately sent out into the environment. If $a_{j}$ is introduced in
membrane $4-j$, $j\in \left\{ 1,2\right\} $, then the trap object $\#$ is
produced and the computation never halts. \smallskip 

\textbf{B.} The simulation of a SUB instruction $l_{i}:\left( \mathtt{SUB}%
\left( r\right) ,l_{j},l_{k}\right) $ is carried out by the following rules
(the simulation again has four steps, as in the proof of Theorem \ref{mcre}):

For the first step, we take the rule $cl_{i}\rightarrow \left( c,in\right)
\left( l_{i},in\right) $ in $R_{1}$ and the rule $l_{i}\rightarrow \# $ in
both $R_{2}$ and $R_{3}$ (if $c$ and $l_{i}$ are not moved together into an
inner membrane, then the trap object $\#$ is produced and the computation
never halts).
In the second step, $R_{r+1}$ has to use the rule $cl_{i}\rightarrow
cl_{i}^{\prime }$. This checks whether $c$ and $l_{i}$ have been moved
together into the right membrane $r+1$; if this is not the case, then the
rule $cl_{i}\rightarrow cl_{i}^{\prime }$ is not available and the rule $%
l_{i}\rightarrow \#$ must be used, which causes the computation to never
halt.

Thus, after the second step, we know whether both $c$ and $l_{i}$ ($%
l_{i}^{\prime }$) are in the right membrane $r+1$. The rules $%
ca_{r}\rightarrow \left( c,out\right) $ and $l_{i}^{\prime }\rightarrow
l_{i}^{\prime \prime }$ in $R_{r+1}$ are used in order to perform the
third step of the simulation. If there is any copy of $a_{r}$ in membrane $%
r+1$ (i.e., if register $r$ is not empty), then the catalyst exits, while
also removing a copy of $a_{r}$. Simultaneously, $l_{i}^{\prime }$ becomes $%
l_{i}^{\prime \prime }$. Hence, if the register $r$ has been non-empty, we
now have $c$ in the skin membrane and $l_{i}^{\prime \prime }$ in membrane $%
r+1$; if register $r$ has been empty, we have both $c$ and $l_{i}^{\prime
\prime }$ in membrane $r+1$. We then use the rules $cl_{i}^{\prime \prime
}\rightarrow \left( c,out\right) \left( l_{k},out\right) $ and $l_{i}^{\prime
\prime }\rightarrow \left( l_{i}^{\prime \prime \prime },out\right) $ in $%
R_{r+1}$ as well as the rules $cl_{i}^{\prime \prime \prime }\rightarrow cl_{j}$
and $l_{i}^{\prime \prime \prime }\rightarrow \#$ in $R_{1}$. 
If $c$ is inside membrane $r+1$, we get $cl_{k}$ in the skin membrane,
which is the correct continuation for the case when the register is empty.
If $c$ is not in membrane $r+1$, then $l_{i}^{\prime \prime }$ exits alone
thereby becoming $l_{i}^{\prime \prime \prime }$, and, together with $c$,
which waits in the skin membrane, introduces $l_{j}$, which is a correct
continuation, too. If the rule $l_{i}^{\prime \prime }\rightarrow \left(
l_{i}^{\prime \prime \prime },out\right) $ is used although $c$ is inside
membrane $r+1$, then in the skin membrane we have to use the rule $%
l_{i}^{\prime \prime \prime }\rightarrow \#$ and the computation never halts
(as we have the rule $\# \rightarrow \# $ in $R_1$).

In all cases, the simulation of the SUB instruction works correctly, and we
return to a configuration with the catalyst and a label from $H$ in the skin
region. \medskip

In sum, we have the equality $Ps\left( M\right) =Ps\left( \Pi \right) $,
which completes the proof.
$\hfill {}$
\end{proof}

\section{Final Remarks}

Although we have exhibited several new computational completeness results
for P systems using only one catalyst together with some additional control
mechanism, the original problem of characterizing the sets of (vectors of)
non-negative integers generated by P systems with only one catalyst still
remains open. A similar challenging problem is to consider \textit{purely
catalytic} P systems with only two catalysts: with only one catalyst, we
obtain the regular sets; as shown in \cite{Freundetal2005}, three catalysts
are enough to obtain computational completeness. With two catalysts and some
additional control mechanism, computational completeness can be obtained,
too, see \cite{Freund2013}.

\bigskip

\noindent \textbf{Acknowledgements.} The work of Gheorghe P\u{a}un has been
supported by Proyecto de Excelencia con Investigador de Reconocida Val\'{\i}%
a, de la Junta de Andaluc\'{\i}a, grant P08 -- TIC 04200. The authors
gratefully acknowledge the suggestions of the referees.

\bibliographystyle{eptcs}
\bibliography{MCU2013FreundPaunFinal}

\end{document}